\titleformat{\section}{\large\bfseries\filcenter}{\thesection}{1em}{}
\titleformat{\subsection}{\bfseries}{\thesubsection}{1em}{}
\titleformat{\subsubsection}[runin]{\bfseries}{\thesubsubsection}{1em}{}[.]
\tikzset{
    >=stealth',
    pil/.style={
           ->,
           ,
           shorten <=2pt,
           shorten >=2pt,}
}
\newcommand{\Z}{\mathbb{Z}}
\newcommand{\R}{\mathbb{R}}
\renewcommand{\P}{\mathcal{P}}
\newcommand{\N}{\mathbb{\Z}_{\geq 0}}
\newcommand{\Id}{\mathrm{Id}}
\newcommand{\RP}{\mathbb{R}\mathrm{P}}
\newcommand{\GL}{\mathrm{GL}}
\newcommand{\PGL}{\mathrm{PGL}}
\newcommand{\D}{\mathcal D}
\newcommand{\Sh}{T}
\newcommand{\gl}{\Mat}
\newcommand{\Ker}[1]{\mathrm{Ker}\,#1}
\newcommand{\g}{\mathfrak{G}}
\newcommand{\ord}[1]{\mathrm{ord}\,#1}
\newcommand{\sgn}{\mathrm{sgn}}
\newcommand{\Mat}{\mathrm{Mat}}
\newcommand{\ttimes}{\,\tilde \times \,}
\newcommand{\Lax}{\mathcal{L}}
\newcommand{\DO}[2]{\mathrm{DO}_{#1}(#2)}
\newcommand{\ALLDO}[1]{\mathrm{DO}_{#1}}
\newcommand{\IDO}[2]{\mathrm{IDO}_{#1}(#2)}
\newcommand{\PBDO}[2]{\mathrm{PBDO}_{#1}(#2)}
\newcommand{\PSIDO}[1]{\Psi\mathrm{DO}_{#1}}
\newcommand{\IPSIDO}[1]{\mathrm{I}\Psi\mathrm{DO}_{#1}}
\newcommand{\Tr}[1]{\mathrm{Tr}\,#1}
\newcommand{\CPM}[2]{  P(#1, #2)}
\newcommand{\Ad}{{\mathrm{Ad}}}
\newcommand{\grad}{{\mathrm{grad}\,}}
\newcommand{\Proj}{\mathbb P}
\newcommand{\pos}{{>0}}
\renewcommand{\neg}{{<0}}
\newtheorem{lemma}{Lemma}[section]
\newtheorem{proposition}[lemma]{Proposition}
\newtheorem{theorem}[lemma]{Theorem}
\newtheorem{corollary}[lemma]{Corollary}
\theoremstyle{definition}
\newtheorem{remark}[lemma]{Remark}
\newtheorem{definition}[lemma]{Definition}
\newtheorem{example}[lemma]{Example}
\renewenvironment{proof}[1][\proofname] {\par\pushQED{\qed}\normalfont\topsep6\p@\@plus6\p@\relax\trivlist\item[\hskip\labelsep\bfseries#1\@addpunct{.}]\ignorespaces}{\popQED\endtrivlist\@endpefalse}
\tikzset{
    >=stealth',
    pil/.style={
           ->,
           ,
           shorten <=2pt,
           shorten >=2pt,}
}
\tikzset{->-/.style={decoration={
  markings,
  mark=at position .7 with {\arrow{>}}},postaction={decorate}}}
  \tikzset{a/.style={decoration={
  markings,
  mark=at position .52 with {\arrow{angle 90}}},postaction={decorate}}}
\tikzset{-<-/.style={decoration={
  markings,
  mark=at position .4 with {\arrow{<}}},postaction={decorate}}}  
\title{Pentagram maps and refactorization in Poisson-Lie groups} 
\author{Anton Izosimov\thanks{Department of Mathematics, University of Arizona, e-mail: \tt{izosimov@math.arizona.edu}}}
\date{}
\begin{document}
\maketitle
\vspace*{-0.7cm}
\abstract{
The pentagram map was introduced by R.~Schwartz in 1992 and is now one
of the most renowned discrete integrable systems. In the present paper
we prove that this  map, as well as all its known integrable
multidimensional generalizations, can be seen as refactorization-type
mappings in the Poisson-Lie group of pseudo-difference operators.
This brings the pentagram map into the rich framework of Poisson-Lie
groups, both describing new structures and simplifying and revealing
the origin of its known properties. In particular, for multidimensional pentagram maps the
Poisson-Lie group setting provides new Lax forms with a spectral
parameter and, more importantly, invariant Poisson structures in all
dimensions, the existence of which has been an open problem since the
introduction of those maps. Furthermore, for the classical pentagram map our approach naturally yields its combinatorial description in terms of weighted directed networks and cluster algebras.
}

\tableofcontents

\section{Introduction and outline of main results}

The pentagram map, introduced by R.\,Schwartz in \cite{schwartz1992pentagram}, is a discrete integrable system on the space of projective equivalence classes of planar polygons. 
The definition of this map is illustrated in Figure~1: the image of the polygon $P$ under the pentagram map is the polygon $P'$ whose vertices are the intersection points of consecutive shortest diagonals of~$P$ (i.e.,  diagonals connecting second-nearest vertices). The pentagram map has been an especially popular subject in the last decade, mainly due to a combination of an elegant geometric definition and connections to such topics as cluster algebras, dimer models etc.
\begin{figure}[h]
\centering
\begin{tikzpicture}[]
\coordinate (VK7) at (0,0);
\coordinate (VK6) at (1.5,-0.5);
\coordinate (VK5) at (3,1);
\coordinate (VK4) at (3,2);
\coordinate (VK3) at (1,3);
\coordinate (VK2) at (-0.5,2.5);
\coordinate (VK1) at (-1,1.5);

\draw  [line width=0.5mm]  (VK7) -- (VK6) -- (VK5) -- (VK4) -- (VK3) -- (VK2) -- (VK1) -- cycle;
\draw [dashed, line width=0.2mm, name path=AC] (VK7) -- (VK5);
\draw [dashed,line width=0.2mm, name path=BD] (VK6) -- (VK4);
\draw [dashed,line width=0.2mm, name path=CE] (VK5) -- (VK3);
\draw [dashed,line width=0.2mm, name path=DF] (VK4) -- (VK2);
\draw [dashed,line width=0.2mm, name path=EG] (VK3) -- (VK1);
\draw [dashed,line width=0.2mm, name path=FA] (VK2) -- (VK7);
\draw [dashed,line width=0.2mm, name path=GB] (VK1) -- (VK6);

\path [name intersections={of=AC and BD,by=Bp}];
\path [name intersections={of=BD and CE,by=Cp}];
\path [name intersections={of=CE and DF,by=Dp}];
\path [name intersections={of=DF and EG,by=Ep}];
\path [name intersections={of=EG and FA,by=Fp}];
\path [name intersections={of=FA and GB,by=Gp}];
\path [name intersections={of=GB and AC,by=Ap}];

\draw  [line width=0.5mm]  (Ap) -- (Bp) -- (Cp) -- (Dp) -- (Ep) -- (Fp) -- (Gp) -- cycle;

\node at (-0.9,2.3) () {$P$};
\node at (2,1.5) () {$P'$};

\end{tikzpicture}
\caption{The pentagram map.}\label{Fig1}
\end{figure}

Integrability of the pentagram map was established, in different contexts, in \cite{ovsienko2010pentagram, ovsienko2013liouville, soloviev2013integrability}.  Furthermore, it was shown that the pentagram map can be viewed as a particular case of several general constructions of integrable systems. In particular, it has an interpretation in terms of cluster algebras~\cite{GLICK20111019}, networks of surfaces \cite{Gekhtman2016},  T-systems \cite{kedem2015t}, and Poisson-Lie groups \cite{fock2014loop}. In the present paper we suggest an alternative to \cite{fock2014loop} Poisson-Lie approach to the pentagram map. Namely, we show that the pentagram map can be seen as a refactorization in the Poisson-Lie group of pseudo-difference operators. The main advantage of our approach is that it is based on the geometric
definition of the map and the explicit formulas are obtained as its corollaries. We thereby obtain all the ingredients needed to establish integrability, namely an invariant Poisson structure, Lax representation, and first integrals, directly from geometry.  This can be compared with other frameworks, in particular, the ones based on cluster
algebras \cite{Gekhtman2016} and Poisson-Lie groups  \cite{fock2014loop}, which lead to integrable maps shown to coincide with the
pentagram map at the level of formulas.


By virtue of the geometric nature of our approach, it almost immediately generalizes to pentagram-type maps in higher dimensions and enables us to treat all these maps on an equal footing. It turns out that our scheme covers all previously known higher-dimensional integrable cases, and also gives rise to a large number of new ones. Furthermore, for many of the previously known integrable maps our approach provides certain missing ingredients, in particular invariant Poisson structures for \textit{short-diagonal} and \textit{dented} maps of~\cite{khesin2013, khesin2016}. Construction of such structures has been an open problem since the introduction of these maps. Furthermore, for these maps we get new Lax representations which are, in a sense, dual to the ones given in~\cite{khesin2013, khesin2016}. 


Recall that a \textit{refactorization} is a mapping of the form $AB \mapsto BA$, where $A$ and $B$ are elements of a non-Abelian group, e.g. matrices. 
The relation between such mappings and integrability was pointed out in
 \cite{veselov1991integrable, moser1991discrete} and put in the context of Poisson-Lie groups in \cite{deift1991poisson}. Nowadays, refactorization in Poisson-Lie groups is viewed as one of the most universal mechanisms of integrability for discrete dynamical systems. 
 In this paper we suggest such an interpretation for the pentagram map and its generalizations. Below we briefly describe the construction for the case of the classical pentagram map. \par

Let $\{ v_i \in \RP^2 \}$ be a planar $n$-gon, and let $\{V_i \in \R^3\}$ be its arbitrary lift to $\R^3$ (here and in what follows we assume that the ground field is real numbers, although all the same constructions work over~$\mathbb C$). The sequence $V_i$ can be encoded by writing down the relations between quadruples of consecutive vectors:
$$
a_i V_i + b_i V_{i+1} + c_i V_{i+2} + d_i V_{i+3} = 0,
$$
where $i \in \Z$, and $a, b, c, d$ are $n$-periodic sequences of real numbers. This can be equivalently written as
$
\D V = 0,
$
where $V$ is a bi-infinite sequence whose entries are the vectors $V_i$, and $\D$ is an $n$-periodic {difference operator}
$$
\D := a+ b \Sh + c \Sh^2 + d \Sh^3.
$$
Here $\Sh$ is the left shift operator  on bi-infinite sequences, $(TV)_i := V_{i+1}$, while the sequences $a,b,c,d$  of real numbers act on sequences of vectors by term-wise multiplication: $(aV)_i := a_i V_i$.
Thus, one can encode planar polygons by \textit{third order difference operators}. There is, however, more than one operator corresponding to a given polygon in $\RP^2$. Namely, one can multiply~$\D$ by scalar sequences from the left or right without changing the corresponding polygon. This means that, for any mapping of the space of polygons to itself, its lift to difference operators is not a map, but a correspondence (a multivalued map). 
To explicitly describe this correspondence for the case of the pentagram map, we split the difference operator $\D = a+ b \Sh + c \Sh^2 + d \Sh^3$ into two parts:
$$
\D_+ := a+ c \Sh^2, \quad
\D_- := b \Sh + d \Sh^3.
$$
\begin{theorem}\label{thm0}
The pentagram map, written in terms of difference operators, is a multivalued map $$\D =  \D_+ + \D_- \quad \longmapsto \quad \tilde \D =  \tilde \D_+ + \tilde \D_-$$ determined by the relation
\begin{equation}\label{mainRelation}
\tilde{\D}_+  {\D}_- = \tilde {\D}_-  {\D}_+.
\end{equation}
\end{theorem}
\begin{proof}[Proof sketch]
Equation \eqref{mainRelation} can be viewed as a homogeneous linear system on $4n$ unknown coefficients of the $n$-periodic operator $\tilde \D$. Both sides of  \eqref{mainRelation} are linear combinations of $\Sh$,  $\Sh^3$, and $\Sh^5$ with $n$-periodic coefficients, so the number of equations is $3n$, which is less than the number of unknowns. Therefore, there always exists a non-trivial solution~$\tilde \D$ depending on~$\D$, and  \eqref{mainRelation} indeed defines a multivalued map $\D \mapsto \tilde \D$. To identify the latter with the pentagram map, we need to rewrite it in terms of bi-infinite sequences $V$, $\tilde V$ annihilated by the operators $ \D$ and $\tilde \D$ respectively. Applying both sides of \eqref{mainRelation} to $V$, we get
$$
\tilde{\D}_+  {\D}_-V = \tilde {\D}_-  {\D}_+V,
$$
which, using that $\D V= 0$ and thus $ \D_-V= -\D_+V$, can be rewritten as
$$
\tilde {\D}\,  {\D}_+ V = 0.
$$
But the latter means that 
$
\tilde V =  {\D}_+ V,
$
which is exactly the definition of the pentagram map. Indeed, by definition of $\D_+$, the vector
$({\D}_+ V)_i$ belongs to the span $\langle V_{i}, V_{i+2}\rangle$ of $V_{i}, V_{i+2}$.
At the same time, we have ${\D}_+ V = -{\D}_- V$, so
$$
({\D}_+ V)_i = -({\D}_- V)_i   \in\langle V_{i+1}, V_{i+3}\rangle.
$$
Therefore, we have
$$
({\D}_+ V)_i \in \langle V_{i}, V_{i+2}\rangle \cap \langle V_{i+1}, V_{i+3}\rangle,
$$
which means that the corresponding point in $\RP^2$ is the intersection of consecutive shortest diagonals~$\langle v_i, v_{i+2}\rangle$ and $\langle v_{i+1}, v_{i+3}\rangle$, as desired.
\end{proof}
\begin{corollary}
The pentagram map, written in terms of difference operators, is a refactorization relation.
\end{corollary}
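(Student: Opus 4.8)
The plan is to read the statement off Theorem~\ref{thm0} by rewriting the defining relation \eqref{mainRelation} in the canonical refactorization form $AB \mapsto BA$. Multiplying \eqref{mainRelation} on the left by $\tilde\D_+^{-1}$ and on the right by $\D_+^{-1}$ gives the equivalent relation
$$
\D_-\D_+^{-1} = \tilde\D_+^{-1}\tilde\D_- .
$$
Set $g := \D_+^{-1}\D_-$, and let $\tilde g := \tilde\D_+^{-1}\tilde\D_-$ denote the same expression built from the image operator $\tilde\D$. Taking $A := \D_+^{-1}$ and $B := \D_-$, we have $g = AB$, while the displayed identity reads precisely $\tilde g = \D_-\D_+^{-1} = BA$. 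Hence, expressed through the variable $g$, the pentagram map is the refactorization $AB \mapsto BA$, which is exactly the assertion of the corollary. (Using instead $g := \D_-^{-1}\D_+$, $A := \D_-^{-1}$, $B := \D_+$ works verbatim, with \eqref{mainRelation} taking the form $\D_+\D_-^{-1} = \tilde\D_-^{-1}\tilde\D_+$.)

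The one point requiring care is the meaning of the inverses $\D_\pm^{-1}$. As operators on bi-infinite sequences the factors $\D_- = a + c\Sh^2$ and $\D_+ = b\Sh + d\Sh^3$ are not invertible --- each has a two-dimensional kernel --- so the factorization $g = AB$ must be read in a group in which $\D_\pm$ do become invertible: the group of matrix-valued rational functions referred to in the introduction (equivalently, a group of invertible pseudo-difference operators), into which $n$-periodic difference operators embed and in which $\det \D_\pm$ is a nonzero rational function of the spectral parameter. I expect transporting the splitting $\D = \D_- + \D_+$ through this dictionary, and checking that $A$ and $B$ land in complementary subgroups making this a genuine Birkhoff-type factorization, to be the only real obstacle; once that is in place, the corollary reduces to the one-line computation above.

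Finally I would record the harmless ambiguity in $\D$. The polygon fixes $\D$ only up to $\D \mapsto s_L \D s_R$ with scalar operators $s_L, s_R$, hence $\D_\pm \mapsto s_L \D_\pm s_R$ and $g \mapsto s_R^{-1} g\, s_R$, so $g$ is defined on a bundle over the space of polygons and, on that space, only up to conjugation by scalar operators. Since refactorization commutes with conjugation --- from $g = AB$ one gets $s^{-1} g s = (s^{-1} A s)(s^{-1} B s) \mapsto (s^{-1} B s)(s^{-1} A s) = s^{-1}(BA)s$ --- the map $AB \mapsto BA$ descends consistently to the space of polygons, and its identification there with the pentagram map via Theorem~\ref{thm0} is unaffected.
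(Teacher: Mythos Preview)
Your proof is correct and essentially identical to the paper's: the paper rewrites \eqref{mainRelation} as $\tilde\D_-^{-1}\tilde\D_+ = \D_+\D_-^{-1}$, sets $\mathcal L := \D_-^{-1}\D_+$, and reads off the refactorization $\D_-^{-1}\D_+ \mapsto \D_+\D_-^{-1}$ --- exactly the alternative you give in parentheses. Your worry about $A$ and $B$ landing in complementary subgroups is unnecessary: the corollary only asserts a refactorization relation $AB \mapsto BA$, not a Birkhoff factorization, and the paper simply takes the inverses in the group of rational loops in $\GL_n$, just as you suggest.
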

\begin{proof}
Relation { \eqref{mainRelation}} can be rewritten as
\begin{equation}\label{refactRelation}
\tilde{\D}_-^{-1}\tilde{\D}_+ = \D_+  \D_-^{-1},
\end{equation}
where the inverses of difference operators are understood as \textit{pseudo-difference operators}. To see that this formula defines a refactorization mapping, consider the operator $\Lax:= {\D}_-^{-1}{\D}_+$.
Then~\eqref{refactRelation} means that the dynamics of $\Lax$ under the pentagram map is given by $\Lax \mapsto \tilde{ \Lax} $, where 
$
\tilde{ \Lax} := \D_+  \D_-^{-1}.
$
 Therefore, the pentagram map in terms of $\Lax$ is a refactorization map
\begin{equation*}\tag*{\qedhere} 
{\D}_-^{-1}{\D}_+ \mapsto \D_+  \D_-^{-1}.
\end{equation*}
\end{proof}
A crucial part of the proof of Theorem \ref{thm0} is solvability of \eqref{mainRelation} with respect to $\tilde \D$, which in turn is related to a very special choice of exponents of the shift operator $\Sh$ entering  $ \D_-$ and $ \D_+$. We refer to the set of integers that are the exponents of $\Sh$ entering a given difference operator $\D$ as the \textit{support} of~$\D$. It is easy to see that~\eqref{mainRelation} is solvable if and only if the supports $J_\pm \subset \Z$ of the operators $ \D_\pm$ satisfy
$$
|J_+ + J_-| < |J_-| + |J_+|,
$$
where $J_+ + J_-$ is the Minkowski sum. Furthermore, for sets $J_\pm$ with $|J_\pm| > 1$ the latter inequality holds if and only if the $J_\pm$  are finite {arithmetic progressions with the same common difference}. Different choices of such pairs of progressions lead to different pentagram-type maps admitting a refactorization description. As we already saw, the choice $\{0,2\}$, $\{1,3\}$ corresponds to the usual pentagram map. More generally, the choice $\{0,2, 4, \dots\}$, $\{1,3, 5, \dots\}$ corresponds to short-diagonal maps of~\cite{khesin2013}. Similarly, $\{0,1\}$, $\{2,3\}$ leads to the inverse pentagram map, while $\{0,1, \dots, p\}$, $\{p+1, p+2, \dots, q\}$ corresponds to the inverse dented map of \cite{khesin2016}. Finally, the choice $\{0,d\}$, $\{1,d+1\}$ leads to the pentagram map on corrugated polygons in $\RP^d$ studied in~\cite{Gekhtman2016}.\par
One can also consider relation \eqref{mainRelation} for difference operators $\D_\pm$ with non-disjoint supports. Such maps still admit a refactorization description, but they do not have a pentagram-like interpretation. Indeed, in this case the pair $\D_\pm$ is not equivalent to a single operator $ \D_+ +  \D_- $, and because of that the phase space cannot be interpreted as the space of polygons. The simplest case $\{0,1\}, \{1,2\}$ corresponds to the leapfrog map defined in  \cite{Gekhtman2016}, while for other cases of non-disjoint supports the geometric interpretation is not known.


\par

The structure of the paper is as follows. In Section \ref{sec:geometry} we define a general class of pentagram-type maps associated with pairs of disjoint arithmetic progressions with the same common difference. This class, in particular, includes all previously known integrable pentagram-type maps. In Section \ref{sec:diff} we discuss difference and pseudo-difference operators, along with Poisson structures on such operators. 
Section \ref{sec3} contains main results of the paper, namely we show that pentagram maps of Section \ref{sec:geometry} fit into an even bigger class of dynamical systems which are parametrized by pairs of not necessarily disjoint progressions and admit a refactorization description. As a corollary, all such maps admit an invariant Poisson structure and a Lax representation with Poisson-commuting spectral invariants. It is therefore very likely that all these maps are both Liouville and algebraically integrable. This integrability problem will be addressed in a separate publication. In addition to these results, in Section~\ref{sec3}  we also discuss some applications, as well as relations to known constructions. In particular, in Section~\ref{sec:scaling} we show how our approach yields the \textit{scaling invariance} of pentagram-type maps, which was the central tool in the proof of integrability for the classical, as well as for short-diagonal and dented maps. Further, in Section~\ref{sec:poisson} we explicitly compute Poisson brackets for the short-diagonal pentagram map in $\RP^3$. In contrast to previously known cases, those brackets turn out to be not quadratic but polynomial of degree four. After that, in Section \ref{sec:meshes} we outline the connection between the approach of the present paper and the Y-meshes description of higher pentagram maps given in \cite{glick2015}. It turns out that Y-meshes are related to factorizations of difference operators. In addition to that, in Section \ref{sec:networks} we show how our refactorization approach can be used to represent pentagram-type maps using moves in Postnikov networks, as in \cite{Gekhtman2016}. This also gives a cluster description of the pentagram map, and one may hope to use our approach to extend the cluster algebra formalism to multidimensional maps, which is still an open problem. Finally, Section \ref{sec:problems} is devoted to open questions.

\par
\bigskip

{\bf Acknowledgments.} The author is grateful to Quinton Aboud, Anton Alekseev, Michael Gekhtman, Boris Khesin, Nicolai Reshetikhin, Richard Schwartz, Alexander Shapiro, Yuri Suris, Sergei Tabachnikov, and the anonymous referee for useful discussions and remarks. A large part of this work was done during the author's visit to Max Planck Institute for Mathematics (Bonn). The author would like to thank the Institute's faculty and staff for their support and stimulating atmosphere.  This work was supported by NSF grant DMS-2008021.

\par\medskip
\section{Pentagram-type maps associated with pairs of arithmetic progressions}\label{sec:geometry}
In this section we explain how to associate a pentagram-type map to any pair of finite disjoint arithmetic progressions $J_\pm \subset \Z$ with the same common difference. As particular cases of this construction, one obtains all known integrable pentagram-type maps. 
Later on, in Section \ref{sec:mainThm}, we will show that these maps fit into a more general class of dynamical systems  which are parametrized by pairs of not necessarily disjoint progressions and admit a refactorization description.\par
All pentagram-type maps operate on polygons, i.e. ordered sequences of points in the projective space. We will only consider polygons satisfying the following natural condition:
\begin{definition}\label{def:poly}
A \textit{polygon} in $ \RP^d$ is a bi-infinite sequence of points $\{ v_i \in \RP^d\}$ such that any $d+1$ consecutive points $v_i, \dots, v_{i+d+1}$ are in general position (i.e. do not belong to a subspace of dimension $d-1$).
\end{definition}
In contrast to the classical pentagram map, which is well-defined for all generic polygons, some of the maps that we will study operate on a more restricted class of polygons whose vertices satisfy certain additional coplanarity conditions, described in the following definition:
\begin{definition}\label{jcorr}
Let $J \subset \Z$, $|J| \geq 2$ be a finite set of integers containing at least two elements, and let $d := \max(J) - \min(J) - 1$. Then a polygon $\{ v_i\}$ in $\RP^d$ is called \textit{$J$-corrugated}  if for any $i \in \Z$ the points $\{ v_{i+j} \mid j \in J\}$ belong to a $|J|-2$ dimensional plane (instead of a $|J|-1$ dimensional plane, which is the generic case).
\end{definition}
\begin{example}\label{Jisint}
Assume that $J$ consists of consecutive integers, $J = \{j, j+1, \dots, j+d+1\}.$ Then  a {$J$-corrugated polygon} is any polygon in $\RP^d$.
\end{example}
\begin{example}
Assume that $J = \{ 0,1,d,d+1\}$. Then $J$-corrugated polygons are \textit{corrugated polygons} in $\RP^d$ in the sense of \cite[Section 5.1.1]{Gekhtman2016}. 
\end{example}
\begin{example}
Assume that $J = \{ 0,1, \dots, l\} \cup \{m, m+1, \dots, d+1\}$, where $l < m$, is a union of two disjoint sets of consecutive integers. Then $J$-corrugated polygons are \textit{partially corrugated polygons} in $\RP^d$ in the sense of~\cite[Definition 6.3]{khesin2016}. 
\end{example}
We now define an analogue of the pentagram map on the space  of $J$-corrugated polygons. Such a map can be defined if $J \subset \Z$ can be partitioned as $J = J_+ \sqcup J_-$, where $J_\pm \subset \Z$ are finite {arithmetic progressions with the same common difference}.


\begin{definition}
Let $J_\pm \subset \Z$ be non-empty disjoint finite integral arithmetic progressions with the same common difference. Let also $J := J_+ \cup J_-$. Then the \textit{pentagram map associated with the pair $J_\pm$} is the map from the space of $J$-corrugated $n$-gons to itself defined by
\begin{equation}\label{jmap}
\tilde v_i : = \langle v_{i + j} \mid j \in J_+ \rangle \cap \langle v_{i + j} \mid j \in J_-\rangle.
\end{equation}
Here $v_i$'s are the vertices of the initial polygon, $\tilde v_i$'s are vertices of its image under the map, and the notation $\langle v_i \rangle$ stands for the projective subspace spanned by the points $\{v_i\}$.
\end{definition} 
\begin{remark}
This definition makes sense for arbitrary disjoint finite sets $J_\pm \subset \Z$, but for general  $J_\pm$ the image of a $J$-corrugated polygon (where $J := J_+ 
\cup J_-$) under the so defined map is not  $J$-corrugated. This property, however, does hold if $J_\pm $ are arithmetic progressions with the same common difference, as shown by the following proposition.

\end{remark}
\begin{proposition}
For any non-empty disjoint finite arithmetic progressions $J_\pm \subset \Z$ with the same common difference, the corresponding pentagram map is a generically well-defined mapping from the space of $J$-corrugated $n$-gons to itself.
\end{proposition}
\begin{proof} For a generic $J$-corrugated polygon $\{v_i \in \RP^d\}$, where $d = \max(J) - \min(J) - 1$,  the subspaces $\langle v_{i + j} \mid j \in J_\pm \rangle$ have complementary dimensions $ |J_\pm| - 1$ in the space $\langle v_{i + j} \mid j \in J \rangle$ of dimension $|J| - 2$. 
Therefore, their intersection indeed defines a point $ \tilde v_i\ \in \RP^d$. Furthermore, it is not hard to see that for generic $v_i$'s any $d+1$ consecutive points $\tilde v_i$ will be in general position, so $\{ \tilde v_i\}$ is a polygon in the sense of Definition \ref{def:poly}. 
Thus, it remains to show that the new polygon $\{ \tilde v_i\}$ is $J$-corrugated. To that end, for any $i \in \Z$, consider the subspace
$
L_i :=  \langle v_{i + j} \mid j \in J_+ + J_-\rangle,
$
where $J_+ + J_-$ is the Minkowski sum of $J_+$ and $J_-$.  Notice that for any $i \in \Z$ and any $j \in J$, we have $\tilde v_{i+j} \in L_i$. Indeed, by construction of the polygon  $\{ \tilde v_i\}$, we have
$$
\tilde v_{i+j} = \langle v_{i + j + j'} \mid j' \in J_+ \rangle \cap \langle v_{i + j + j'} \mid j' \in J_-\rangle.
$$
Assume that $j \in J_+$. Then
$\langle v_{i + j + j'} \mid j' \in J_-\rangle$ is a subspace of $L_i$, because $j + j'  \in J_+ + J_-$. Therefore, $\tilde v_{i+j} \in L_i$. Analogously, if $j \in J_-$, then $\langle v_{i + j + j'} \mid j' \in J_+ \rangle$  is a subspace of $L_i$, and we still have $\tilde v_{i+j} \in L_i$. So, all the points $\{ \tilde v_{i+j} \mid j \in J \}$ belong to $L_i$. But the dimension of $L_i$ does not exceed
$$
|J_+ + J_-| - 1 = |J_+| + |J_-| - 2 = |J| -2,
$$
where we use that for finite arithmetic progressions  $J_\pm \subset \Z$ with the same common difference one has $|J_+ + J_-| = |J_+| + |J_-| - 1$. So, for every $i \in \Z$, the points $\{ \tilde v_{i+j} \mid j \in J \}$ belong to at most $|J| -2$ dimensional subspace $L_i$, which means the polygon $\{ \tilde v_i\}$ is indeed $J$-corrugated, as desired.
\end{proof}
\renewcommand{\arraystretch}{1.2}
\begin{table}
$$
\left.\begin{array}{c|c|c}   && \\[-1em] \mathbf{J_+} & \mathbf{J_-} & \mbox{\textbf{The corresponding map}}\\[-1em] && \\ \hline \{0,2\} & \{1,3\} & \mbox{Classical pentagram map} \\\hline \{0,1\} & \{2,3\} & \mbox{Inverse pentagram map} \\\hline \{0,d\} & \{1,d+1\} & \mbox{Pentagram map on corrugated polygons in $\RP^d$ \cite{Gekhtman2016}} \\\hline \{0,1\} & \{d, d+1\} & \mbox{Inverse pentagram map on corrugated polygons in $\RP^d$} \\\hline \{0, \dots, k\} & \{k+1, \dots, d+1\} & \mbox{Inverse dented pentagram maps in $\RP^{d}$ \cite{khesin2016}}  \\\hline 
\{0,2,4, \dots, 2k\} & \{1,3,5, \dots, 2k+1\} & \mbox{Short-diagonal pentagram map in $\RP^{2k}$ \cite{khesin2013}   } 
\\\hline 
\{0,2,4, \dots, 2k\} & \{1,3,5, \dots, 2k-1\} & \mbox{Short-diagonal pentagram map in $\RP^{2k-1}$  \cite{khesin2013}   } \\

 \end{array}\right.
$$
\caption{Examples of pentagram maps associated with pairs of arithmetic progressions.}\label{tab:pentmaps}
\end{table}
\begin{example}
Examples of pentagram maps associated with pairs of arithmetic progressions are given in Table~\ref{tab:pentmaps}. Note that these examples cover all known integrable cases, so all such cases fit into the above construction. 
\end{example}
\begin{remark}\label{rem:shifting}
The classical pentagram is usually defined by $\tilde v_i = \langle v_{i-1}, v_{i+1} \rangle \cap \langle v_{i}, v_{i+2} \rangle$ (\textit{right} labelling scheme), or by $\tilde v_i = \langle v_{i-2}, v_{i} \rangle \cap \langle v_{i-1}, v_{i+1} \rangle $ (\textit{left} labelling scheme). This corresponds to progressions $\{-1,1\}, \{0,2\}$ for the right scheme, and $\{-2,0\}, \{-1,1\}$ for the left scheme. Our choice $\{0,2\}, \{1,3\}$ corresponds to the same map, but with a different labeling of vertices of the resulting polygon. More generally, shifting both $J_+$ and $J_-$ by the same number results in the same map up to a shift of indices. 
\end{remark}
\begin{remark}
Note that except for the short-diagonal and inverse dented cases, our construction gives no maps which are defined on all generic polygons (with no additional coplanarity conditions). Indeed, such maps would correspond to sets $J$ consisting of consecutive integers (cf. Example \ref{Jisint}), and without loss of generality we can assume that $J= \{0,1,\dots \}$ (because we can always shift $J$, as in Remark~\ref{rem:shifting}). But the only ways to represent this set $J$ as a disjoint union of two arithmetic progressions with the same common difference are $\{0, 1, \dots, k\} \sqcup \{k+1, k+2, \dots\}$ and $\{0, 2, 4, \dots\} \sqcup \{1,3,5, \dots\}$, which corresponds to the inverse dented and short-diagonal maps respectively. 
\end{remark}
The space of $J$-corrugated polygons is infinite-dimensional for any $J$ with $|J| > 2$. One can still study pentagram-type maps on such spaces, but to obtain integrable dynamics one should impose some kind of boundary conditions on the polygon $\{v_i\}$. From the geometric perspective, the most natural condition is closedness, $v_{i+n} = v_i$. However, it turns out that the pentagram map, as well as similar maps studied in the present paper, have much better properties on a bigger space of polygons that are closed only up to a projective transformation. Such polygons as known as {twisted}:
 \begin{definition}
A \textit{twisted $n$-gon} is a polygon $\{ v_i \in \RP^d\}$ such that $v_{i+n} = \phi(v_i)$ for every $i$ and a fixed (not depending on $i$) projective transformation $\phi \colon \RP^d \to \RP^d$, called the \textit{monodromy}.
\end{definition}
It is clear that all pentagram maps defined above (as well as any other map on polygons which is defined using only projectively natural operations) take twisted polygons to twisted polygons and, moreover, preserve the monodromy. Throughout the paper, all pentagram-type maps are assumed to operate on twisted polygons.

\medskip

\section{Difference and pseudo-difference operators}\label{sec:diff}

\subsection{Generalities on difference operators}\label{sec:primer}
In this section we recall some basic notions related to difference operators. Our terminology mainly follows that of \cite{van1979spectrum}. Let $\R^\infty$ be the vector space of bi-infinite sequences of real numbers, and let $J \subset \Z$ be a finite collection of integers. A linear operator $\D \colon \R^\infty \to \R^\infty$ is called \textit{a difference operator supported in $J$} if it can be written as
\begin{equation}\label{dodef}
(\D\xi)_i = \! \sum_{j \in J} \! a_{j,i} \xi_{i+j},
\end{equation}
or, equivalently, if
	\begin{align}\label{genDiffOp}
\D = \!\sum_{j \in J} a_j T^j,
\end{align}
where $T \colon \R^\infty \to \R^\infty$ is the left shift operator $(T\xi)_i = \xi_{i+1}$, and each coefficient $a_j$ is a bi-infinite sequence $\{ a_{j,i} \mid i \in \Z\}$ of real numbers acting on $\R^\infty$ by term-wise multiplication. Such sequences can per se be regarded as difference operators with $J = \{0\}$. 
\par
The \textit{order} of difference operator \eqref{dodef} is the number $\ord \D := M - m$, where $M: = \max J$, $m:=\min J$. Difference operator~\eqref{dodef} is called \textit{properly bounded} if  none of the elements of sequences $a_{m}$, $a_{M}$ vanish. Clearly, for a properly bounded difference operator $\D$ one has
$
\dim \Ker \D = \ord \D.
$ 
A difference operator $\D$ is \textit{$n$-periodic} if all its coefficients $a_j$ are $n$-periodic sequences, which is equivalent to saying that $\D$ commutes with the $n$'th power of the shift operator: $\D T^n = T^n\D$. Clearly, if $\D$ is an $n$-periodic operator, then its kernel is invariant under the action of $T^n$. The finite-dimensional operator $T^n\vert_{\Ker \D}$ is called the \textit{monodromy} of $\D$. Eigenvectors of the monodromy operator $ T^n\vert_{\Ker \D}$ are exactly {quasi-periodic} solutions of the equation $\D\xi = 0$, i.e. solutions which belong to the space of quasi-periodic sequences
\begin{equation}\label{qspace}
 \{ \xi \in \R^\infty \mid \xi_{i+n} = z\xi_i\}
\end{equation}
for certain $z \in \R^*$. 
\par
We denote the space of $n$-periodic difference operators supported in $J$ by $\DO{n}{J}$, while $\PBDO{n}{J} \subset \DO{n}{J}$ stands for the (dense) subset of properly bounded operators. Let also $\ALLDO{n}$ be the associative algebra of all $n$-periodic difference operators (with arbitrary finite support).
\begin{remark}\label{app:loopgroup}
The algebra $\ALLDO{n}$ of  $n$-periodic difference operators is isomorphic to the algebra $\gl_n \otimes \R[z,z^{-1}]$ of $\gl_n$-valued Laurent polynomials in one variable~$z$ (here $\gl_n$ stands for the associative algebra of $n \times n$ matrices over the base field $\R$). Indeed, consider the natural action of $n$-periodic difference operators on the space \eqref{qspace} of all $n$-quasi-periodic bi-infinite sequences of real numbers with monodromy $z$. This gives a $1$-parametric family $\rho_z$ of $n$-dimensional representations of the algebra $\ALLDO{n}$.
In each of the spaces \eqref{qspace}, take a basis  $\xi_1, \dots, \xi_n$ determined by the condition $\xi_{ij}= \delta_{ij}$ for $i,j = 1, \dots, n$ (where $\delta_{ij}$ is the Kronecker delta). Written in this basis, the representation $\rho_z$ takes an $n$-periodic sequence $a = \{a_i\}$ (viewed as a zero order difference operator) to a diagonal matrix with entries $a_1, \dots, a_n$, while the shift operator $T$ becomes the matrix $\sum_{i=1}^{n-1} {E}_{i,i+1} + z{E}_{n,1}$, where ${E}_{i,j}$ is the matrix with a $1$ at position $(i,j)$ and zeros elsewhere.
Therefore, since the algebra of difference operators is generated by sequences, $T$, and $T^{-1}$, it follows that $\rho_z$ can be viewed as a homomorphism of difference operators into
 $ {\gl}_n \otimes \R[z, z^{-1}]$. Furthermore, it is easy to verify that this homomorphism is a bijection, and hence an isomorphism. 
 
\end{remark}
\begin{proposition}\label{prop:detmono}
Let $\D$ be a properly bounded difference operator supported in $J$, and let $\D(z)$ be the associated element of the loop algebra. Then $\det \D(z)$ is a constant multiple of the polynomial $z^{m}\CPM{\D}{z}$, where $m:= \min J$, and $\CPM{\D}{z}$ is the characteristic polynomial of the monodromy of $\D$.
\end{proposition}
\begin{proof}
If we multiply $\D$ by $T^k$, where $k \in \Z$, then the characteristic polynomial of its monodromy does not change, while the polynomial $\det \D(z)$ gets multiplied by $\det T^k(z) = (\det T(z))^k = z^k$. So, it suffices to consider the case $\min J = 0$. Furthermore, it is sufficient to prove the statement for generic properly bounded operators supported in $\{0, \dots, d\}$, because within that set the coefficients of both polynomials $\det \D(z)$ and $\CPM{\D}{z}$ are polynomial functions in terms of the coefficients of $\D$. So, if one can show that these polynomials are proportional for generic operators, then it must be true for all operators. To establish the statement for generic $\D$, observe that by definition of $\D(z)$ the polynomial   $\det \D(z)$ vanishes for some $z \neq 0$ if and only if $\D$ has a kernel on the space \eqref{qspace}, which is equivalent to saying that $z$ is an eigenvalue of the monodromy of $\D$. So, the roots of the polynomials $\det \D(z)$ and $\CPM{\D}{z}$ are the same (as sets). Furthermore, for generic $\D$ all roots of $\CPM{\D}{z}$ are distinct. So, to prove that the polynomials $\det \D(z)$ and $\CPM{\D}{z}$  are proportional, it suffices to show that they have the same degree. In other words, we need to show that the degree of $\det \D(z)$ is equal to the degree $d$ of $\D$. This can be checked by explicitly writing down the matrix $\D(z)$, or by using the following argument. First of all, one easily checks that the statement holds for operators of degree~$1$. But a generic operator $\D$ of degree $d$ can be written as a product of operators of degree~$1$, so by multiplicativity for such operator we have $\deg \det \D(z) = d$, as desired.
\end{proof}

\subsection{Difference operators and $J$-corrugated polygons}\label{subsec:diffpoly}
There is a close relation between difference operators supported in $J$ and {$J$-corrugated} polygons.
%
 Denote by $\P_n(J)$ the space of twisted $J$-corrugated $n$-gons, and let $\P_n(J) \,/\, \PGL$ be the quotient of that space by projective transformations. 
We will describe that space as a certain quotient of the space $\PBDO{n}{J}$ of properly bounded $n$-periodic difference operators supported in $J$. Namely, let $H$ be the group of non-vanishing $n$-quasi-periodic scalar sequences, i.e.
$$
H := \{ \alpha \in \R^\infty \mid \forall\, i\in \Z,  \alpha_i \neq 0, \mbox{ and } \exists \,  z \in \R^* \mbox{ s.t. }  \forall\, i\in \Z, \,\alpha_{i+n} = z\alpha_i \}. 
$$
Further, let $H \ttimes H$ be the subgroup of $H \times H$ that consists of pairs of  non-vanishing $n$-quasi-periodic scalar sequences with the same monodromy, i.e.
\begin{align*}
H \ttimes H := &\{ (\alpha, \beta) \in \R^\infty \times \R^\infty \mid \forall\, i\in \Z,  \alpha_i \neq 0, \beta_i \neq 0, \\ &\mbox{ and } \exists \,  z \in \R^* \mbox{ s.t. }  \forall\, i\in \Z, \,\alpha_{i+n} = z\alpha_i, \beta_{i+n} = z\beta_i \}. 
\end{align*}
This group acts on the space $\DO{n}{J}$ of $n$-periodic difference operators with given support by means of the \textit{left-right action}
\begin{equation}\label{action}
\D \mapsto \alpha \D \beta^{-1}.
\end{equation}
\begin{proposition}\label{prop:quo}
For any finite subset $J \subset \Z$ with $|J| \geq 2$, there is a one-to-one correspondence (a homeomorphism) between the following spaces:
\begin{enumerate}
\item The space $\P_n(J) \,/\, \PGL$ of twisted $J$-corrugated $n$-gons modulo projective transformations.
\item The space $\PBDO{n}{J} \, / \,H \ttimes H$ of properly bounded $n$-periodic difference operators supported in $J$ modulo the left-right action \eqref{action} of the group $H \ttimes H$ of pairs of  non-vanishing $n$-quasi-periodic scalar sequences with the same monodromy.
\end{enumerate}

\end{proposition}
\begin{proof}
The proof is analogous to that of \cite[Proposition 2.2]{izosimov2019pentagram}. Let us just briefly outline the construction. Given a projective equivalence class of generic  twisted $J$-corrugated $n$-gons, consider an arbitrary representative $\{v_i \in \RP^d\} \in \P_n(J)$ of that class (here $d := \max(J) - \min(J) - 1$). Lift the quasi-periodic sequence of points $v_i \in \RP^d$ to a quasi-periodic sequence of vectors $V_i\ \in \R^{d+1}$. Then, from the $J$-corrugated condition it follows that for any $i \in \Z$ the vectors $\{V_{i+j}\mid j \in J\}$ belong to a subspace of dimension $|J|-1$ and, therefore, are linearly dependent:
\begin{equation}\label{videp}
\sum_{j \in J} \! a_{j,i} V_{i+j} = 0.
\end{equation}
This is equivalent to $\D V = 0$, where $V$ is the bi-infinite sequence of $V_i$'s, and the operator $\D$ is given by~\eqref{dodef}. Furthermore, since the sequence $\{V_i\}$ is quasi-periodic, the so-obtained operator $\D$ is periodic, while from the genericity condition for $\{v_i \in \RP^d\}$ it follows that $\D$ is properly bounded. Hence we obtain a properly bounded $n$-periodic difference operator supported in $J$. To complete the proof, it suffices to notice that from the possibility to rescale each of the $V_i$'s and also multiply each of the equations~\eqref{videp} by a scalar it follows that $\D$ is defined up to the left-right action  \eqref{action}. Details of the proof (in the case when $J$ consists of four consecutive integers) can be found in \cite{izosimov2019pentagram}.
\end{proof}
As can be seen from this construction, one has the following relation between the monodromy of a $J$-corrugated polygon and the monodromy of the corresponding difference operator:
\begin{corollary}
Let  $P \in \P_n(J)$ be a twisted $J$-corrugated $n$-gon, and let $\D \in \PBDO{n}{J}$ be one of the corresponding difference operators. Then the monodromy of $P$ is conjugate to the projectivization of the monodromy of $\D$.

\end{corollary}
\begin{proof}
Assume that the monodromy of the polygon $\{v_i\}$  in the proof of Proposition \ref{prop:quo} is given by the projective transformation $\phi$. Then the sequence of $V_i$'s satisfies $V_{i+n} = MV_i$, where $M$ is a matrix of $\phi$ (i.e. $\phi$ is the projectivization of $M$). At the same time, the components of the vectors $V_i$ form a basis in the space $\Ker \D$, and the monodromy matrix $T^n\vert_{\Ker \D}$ written in that basis is the transpose of $M$ (indeed, denoting the basis vectors by $\xi_1, \dots, \xi_d$, we can rewrite $V_{i+n} = MV_i$ as $ (T^n \xi_1, \dots, T^n \xi_d)^t  = M(\xi_1, \dots, \xi_d)^t $, which means that the matrix of the transformation $T^n$ in the basis $\xi_1, \dots, \xi_d$ is $M^t$). So, the monodromy of the polygon $\{v_i\}$  is the projectivization of $(T^n\vert_{\Ker \D})^t$, and hence is conjugate to the projectivization of  $T^n\vert_{\Ker \D}$.
\end{proof}
In particular, one has the following relation between the eigenvalues of the monodromy and the determinant of the corresponding loop algebra element:
\begin{corollary}\label{monodet}
Let  $P \in \P_n(J)$ be a twisted $J$-corrugated $n$-gon, and let $\D \in \PBDO{n}{J}$ be one of the corresponding difference operators. Then the eigenvalues of the monodromy of $P$ coincide (with multiplicities) with non-zero roots of the polynomial $\det \D(z)$, where $\D(z)$ is an element of the loop algebra corresponding to the difference operator $\D$, as described in Remark \ref{app:loopgroup}.
\end{corollary}
\begin{proof}
This follows from Proposition \ref{prop:detmono}.
\end{proof}
\begin{remark}
Note that the monodromy of a twisted polygon is a projective transformation, so its eigenvalues are defined up to simultaneous multiplication by the same constant. However, the same is true for the roots of  $\det \D(z)$, because taking $\alpha$ and $\beta$ in \eqref{action} with non-trivial monodromy $w$ leads to simultaneous rescaling of all the roots by a factor of $w$.

%
\end{remark}

\subsection{The Poisson-Lie group of pseudo-difference operators}
We define an $n$-periodic \textit{pseudo-difference operator} as a formal Laurent series in terms of the left shift operator $T$, whose coefficients are $n$-periodic sequences. In other words, every such operator is of the form
\begin{equation}\label{app:psido}
\sum_{j = k}^{+\infty} a_j T^j,
\end{equation}
where $k \in \Z$ is an integer, $T$ is the left shift operator on $\R^\infty$, while each $a_j$ is an $n$-periodic bi-infinite sequence of real numbers. Such an expression can be regarded either as a formal sum, or as an actual operator acting on the space $\{\xi \in \R^\infty \mid \exists\, j \in \Z : \xi_i = 0 \, \forall \, i > j\}$ of eventually vanishing sequences.\par
We will denote the set of $n$-periodic pseudo-difference operators by $\PSIDO{n}$. Is is an associative algebra with respect to addition and multiplication (composition) of operators. Moreover, almost every pseudo-difference operator in invertible. In particular, \eqref{app:psido} is invertible if the coefficient $a_k$ of lowest power in $T$ is a sequence none of whose elements vanish. 
We will denote the set of invertible $n$-periodic pseudo-difference operators by $\IPSIDO{n}$. This is a group with respect to multiplication. It can be regarded as an infinite-dimensional Lie group with Lie algebra $\PSIDO{n}$.

\begin{remark}
One can also consider (and apply for the purposes of the present paper) pseudo-difference operators which have infinitely many terms of negative degree in $T$, but only finitely many terms of positive degree. This leads to an isomorphic algebra. 
\end{remark}
\begin{remark}\label{rm:psiloop}
The isomorphism $\ALLDO{n}\simeq\gl_n \otimes \R[z,z^{-1}]$ described in Remark \ref{app:loopgroup} naturally extends to an isomorphism between the algebra of $n$-periodic pseudo-difference operators, and the algebra $\gl_n \otimes \R((z))$ of matrices over the field $\R((z))$ of formal Laurent series with real coefficients and finitely many terms of negative degree. Under this isomorphism, the group $\IPSIDO{n}$ of invertible pseudo-difference operators is identified with the group of matrices over Laurent series with non-vanishing determinant (this group is one of the versions of the \textit{loop group} of $\GL_n$). 
\end{remark}
\begin{proposition}\label{app:mainprop}
There exists a natural Poisson structure $\pi$ on the group $\IPSIDO{n}$ of $n$-periodic invertible pseudo-difference operators. This structure has the following properties:
\begin{enumerate}
\item It is multiplicative, in the sense that the group multiplication is a Poisson map. In other words, the group $\IPSIDO{n}$, together with the structure $\pi$, is a \textit{Poisson-Lie group}.
\item Assume that $J \subset \Z$ is a finite subset that consists of consecutive integers. Then the subset $\IDO{n}{J} := \IPSIDO{n} \cap \DO{n}{J}$ of invertible difference operators (that is, difference operators whose inverse is well-defined as a \textit{pseudo-difference} operator) supported in $J$ is a Poisson submanifold of  $\IPSIDO{n}$.
\item If $J$ is a one-point set, then the restriction of $\pi$ to $\IDO{n}{J}$ is zero. In particular, the Poisson structure $\pi$ vanishes on sequences (viewed as difference operators supported in $\{0\}$).
\item The Poisson structure $\pi$ is invariant under the left-right action \eqref{action} of the group $H \ttimes H$ of pairs of  non-vanishing $n$-quasi-periodic sequences with the same monodromy. 
%
\item Central functions on $\IPSIDO{n}$ Poisson commute. 
\end{enumerate}
\end{proposition}
\begin{remark}
As explained in Remark \ref{rm:psiloop}, the group  $\IPSIDO{n}$ is isomorphic to a version of the loop group of $\GL_n$. In the loop group language, the Poisson structure $\pi$ is well-known: it is the one associated with the \textit{trigonometric $r$-matrix}. Here we will provide a construction of this Poisson structure which does not appeal to the loop group formalism. In fact, the language of (pseudo)difference operators seems to be more natural when dealing with the trigonometric $r$-matrix. We will, however, use the loop group language in some of the computations, see in particular Section \ref{sec:gln}.\par
Our  Poisson structure on pseudo-difference operators can also be viewed as a natural discrete analogue of the Poisson-Lie structure on pseudo-differential operators \cite{khesin1995poisson}.
\end{remark}

\begin{remark}
For periodic sequences $\alpha$ and $\beta$, the fourth statement of Proposition \ref{app:mainprop} follows from the third one combined with the first. Indeed, by the third statement the Poisson structure $\pi$ vanishes on sequences, so from multiplicativity we get that both left and right multiplications by sequences are Poisson maps. However, if the sequences $\alpha$ and $\beta$ have non-trivial monodromy, then one cannot extract the fourth statement of the proposition from multiplicativity, because in that case $\alpha$ and $\beta$ are not elements of $\IPSIDO{n}$.

%

\end{remark}
\begin{remark}\label{ratvsint}
Take a subset $J \subset \Z$ which consists of $d+2$ consecutive integers. Then the $J$-corrugated condition is vacuous and, according to Proposition \ref{prop:quo}, the quotient of  $\PBDO{n}{J}$ by the action \eqref{action} can be identified with the space of twisted polygons in $\RP^d$, considered up to projective equivalence. So, restricting the Poisson structure $\pi$ to $\PBDO{n}{J}$ (which is an open subset of $\IDO{n}{J}$ and hence a Poisson submanifold) and taking the quotient under the action \eqref{action} one gets a Poisson structure on the space of polygons. It seems, however, that this structure has nothing to do with pentagram maps. As will be explained below, Poisson structures invariant under pentagram maps arise from Poisson submanifolds of $\IPSIDO{n}$ given by \textit{rational} pseudo-difference operators, i.e. operators that can be written as a quotient of two difference operators. 
\end{remark}
We prove Proposition \ref{app:mainprop} in  Section \ref{app:mainproof}, after a brief general discussion of Poisson-Lie groups in Section \ref{sec:gplg}. 

\subsection{Generalities on Poisson-Lie groups}\label{sec:gplg}
This section is a brief introduction to the theory of Poisson-Lie groups. Our terminology follows that of~\cite{reiman2003integrable}. Recall that a Lie group $G$ endowed with a Poisson structure $\pi$ is called a \textit{Poisson-Lie group} if $\pi$ is \textit{multiplicative}, i.e. if the multiplication $G \times G \to G$ is a Poisson map (it also follows from this that the inversion map $i \colon G \to G$ is anti-Poisson, i.e. $i_*\pi = -\pi$). Assume that $G$ is a Poisson-Lie group, and let $\g$ be its Lie algebra. Then, by considering the left trivialization of the tangent bundle of $G$, one can identify the bivector field $\pi$ with a map $G \to \g \wedge \g$. Furthermore, one can show that multiplicativity of $\pi$ is equivalent to that map being a cocycle on $G$ with respect to the adjoint representation of $G$ on $ \g \wedge \g$. If that cocycle is a coboundary, then $G$ is called a \textit{coboundary Poisson-Lie group}. A Poisson-Lie group $G$ is coboundary  if and only if there exists an element $\hat r \in \g \wedge \g$, called the \textit{classical $r$-matrix}, such that the Poisson tensor $\pi$ at every point $g \in G$ is given by
\begin{equation}\label{app:cobound}
\pi_g = \frac{1}{2} \left(\vphantom{\int} (\lambda_g)_*\hat r - (\rho_g)_*\hat r\right),
\end{equation}
where $\lambda_g$ and $\rho_g$ are, respectively, the left and right translations by $g$. 
Note that although the bivector~\eqref{app:cobound} is automatically multiplicative (since any coboundary is a cocycle), it does not need to satisfy the Jacobi identity. The necessary  and sufficient condition for \eqref{app:cobound} to satisfy the Jacobi identity is a rather complicated equation in terms of $\hat r$ which is usually replaced by simpler sufficient conditions, such as the \textit{modified Yang-Baxter equation}. 
We will state this condition under the assumption that the Lie algebra $\g$ is endowed with an invariant (under the adjoint action of $G$) inner product, in which case one can identify the bivector $\hat r \in \g \wedge \g$ with a skew-symmetric operator $r \colon \g \to \g$ (which is also called the $r$-matrix). In terms of that operator, the modified Yang-Baxter equation reads
\begin{equation}\label{app:mybe}
[rx, r y] - r[rx, y] - r[x, r y] = -[x,  y] \quad \forall\, x,y \in \g.
\end{equation}
It is well-known that this equation implies the Jacobi identity for~\eqref{app:cobound}. If the Lie algebra of a coboundary Poisson-Lie group $G$ is endowed with an invariant inner product, and the corresponding $r$-matrix satisfies the modified Yang-Baxter equation \eqref{app:mybe}, then $G$ is called \textit{factorizable}. In what follows, we will be interested in one particular type of $r$-matrices satisfying  the modified Yang-Baxter equation:
\begin{proposition}\label{app:gmt}
Let $\g$ be a Lie algebra endowed with an invariant inner product. Assume also that $\g$, as a vector space, can be written as a direct sum of three subalgebras $\g_\pos$, $\g_0$, and $\g_\neg$, such that $[\g_0, \g_\pos] \subset \g_\pos$, $[\g_0, \g_\neg] \subset \g_\neg$, the subalgebras $\g_\pos$, $\g_\neg$ are isotropic, and $\g_0$ is orthogonal to both $\g_\pos$ and $\g_\neg$.
 Then $r:= p_\pos - p_\neg$, where $p_\pos$, $p_\neg$ are projectors $\g \to \g_\pos$, $\g \to \g_\neg$ respectively, satisfies the modified Yang-Baxter equation, thus turning the group $G$ of the Lie algebra $\g$ into a factorizable Poisson-Lie group.

\end{proposition}
\begin{proof}
Direct verification of \eqref{app:mybe}.
\end{proof}

\begin{remark}\label{rem:gd}
Formula \eqref{app:cobound} for the coboundary Poisson-Lie bracket can be written in a more explicit form when the Lie group $G$ can be embedded, as an open subset, into an associative algebra $A$ (in a typical situation $G$ coincides with the group of invertible elements in $A$, e.g. the group of invertible matrices inside the algebra of all $n \times n$ matrices). In this case, the Lie algebra of $G$ and, more generally, the tangent space to $G$ at any point can be naturally identified with $A$. Assume also that $A$ is endowed with an invariant inner product, which in the context of associative algebras means that $\langle xy,z\rangle = \langle x,yz\rangle$ for any $x,y,z \in A$ (in particular, this inner product is invariant with respect to the adjoint action of $G \subset A$ on $A$). In that case the $r$-matrix can be thought of as a skew-symmetric operator $r \colon A \to A$, and identifying the cotangent space $T_g^*G$ with the tangent space $T_g G = A$ by means of the invariant inner product, one can rewrite formula~\eqref{app:cobound} for the corresponding Poisson tensor on $G$ as
\begin{equation}\label{app:gd}
\pi_g(x,y) = \frac{1}{2} \left(\vphantom{\int}\langle r(xg), yg\rangle - \langle r(gx), gy\rangle\right) \quad \forall\, g \in G, x,y \in A.
\end{equation}
The corresponding Poisson bracket is given by
\begin{equation}\label{app:gd2}
\{f_1, f_2\}(g) = \frac{1}{2} \left(\vphantom{\int}\langle r(\grad f_1(g)\cdot g), \grad f_2(g)\cdot g\rangle - \langle r(g\cdot \grad f_1(g)), g\cdot\grad f_2(g)\rangle\right),
\end{equation}
where the gradients are defined using the invariant inner product. Notice that the right-hand side of this formula is actually defined for every $g \in A$, i.e. invertibility of $g$ is not necessary. Therefore, this formula may be used to define a Poisson bracket on the whole of $A$. This bracket is known as the \textit{second Gelfand-Dickey bracket} on the associative algebra~$A$.
\end{remark}
In what follows we will need the following standard facts about coboundary Poisson-Lie groups:
\begin{proposition}\label{zeroPoisson}
Let $G$ be a Lie group endowed with a coboundary Poisson structure $\pi$ defined by $r$-matrix $\hat r$, and let $g \in G$. Then the Poisson structure $\pi$ vanishes at $g$ if and only if $(\mathrm{Ad}_g)_* \hat r = \hat r$.
\end{proposition}
\begin{proof}
We have $(\mathrm{Ad}_g)_*\hat r = (\rho_g^{-1})_* (\lambda_g)_*\hat r$, so $(\mathrm{Ad}_g)_* \hat r = \hat r$ if and only if $(\lambda_g)_*\hat r = (\rho_g)_*\hat r$, i.e. $\pi_g = 0$.
\end{proof}
\begin{proposition}\label{actionprop}
Let $\sigma \colon G \to G$ be an automorphism of a coboundary Poisson-Lie group. Assume that the differential of $\sigma$ at the identity preserves the $r$-matrix $\hat r$. Then $\sigma$ is a Poisson map.
\end{proposition}
\begin{proof}
Since $\sigma$ is an automorphism, we have
$
 \lambda_{\sigma(g)} = \sigma  \lambda_g \sigma^{-1}$ and   $\rho_{\sigma(g)} =  \sigma  \rho_g \sigma^{-1},
$
so 
$$
\pi_{\sigma(g)} = \frac{1}{2}\left(\vphantom{\int}(\lambda_{\sigma(g)})_*\hat r - (\rho_{\sigma(g)})_*\hat r \right)= \frac{1}{2}\left( \vphantom{\int}\sigma_* (\lambda_g)_* (\sigma^{-1})_*\hat r -  \sigma_* (\rho_g)_* (\sigma^{-1})_*\hat r\right).
$$
Since $\sigma$ preserves the $r$-matrix, the latter expression can be rewritten as
$$
 \frac{1}{2}\left(\vphantom{\int}  \sigma_* (\lambda_g)_* \hat r -  \sigma_* (\rho_g)_*\hat r \right) = \sigma_*\pi_g.
$$
So, $\pi_{\sigma(g)} = \sigma_*\pi_g$, which means that $\sigma$ is a Poisson map.
\end{proof}
\begin{proposition}\label{cf}
Central functions on a coboundary Poisson-Lie group Poisson commute.
\end{proposition}
\begin{proof}
Formula \eqref{app:cobound} is equivalent to
$$
\{f_1, f_2\}(g) =\frac{1}{2} \left(\vphantom{\int} \hat r(\lambda_g^* df_1(g),\lambda_g^* df_2(g) ) - \hat r(\rho_g^* df_1(g),\rho_g^* df_2(g) ) \right) \quad \forall\, f_1, f_2 \in C^\infty(G), g \in G.
$$
But for central functions $f_1$, $f_2$ we have $f_i \circ \lambda_g = f_i \circ \rho_g \Rightarrow \lambda_g^* df_i(g) = \rho_g^* df_i(g) \Rightarrow \{f_1, f_2\} = 0$.
\end{proof}

\subsection{Existence and properties of the Poisson structure}\label{app:mainproof}

In this section we prove Proposition \ref{app:mainprop} describing the Poisson structure on the group $\IPSIDO{n}$ of $n$-periodic invertible pseudo-difference operators.
To define that structure, we will use the construction described in Proposition \ref{app:gmt}. The Lie algebra of the group $\IPSIDO{n}$ is the space $\PSIDO{n}$ of all $n$-periodic  pseudo-difference operators. That is actually an associative algebra in which $\IPSIDO{n}$ is embedded as the set of invertible elements. 
That algebra has an invariant inner product defined by
\begin{equation}\label{app:ip}
\langle \D_1, \D_2\rangle = \Tr \D_1\D_2 \quad \forall \, \D_1, \D_2 \in \PSIDO{n},
\end{equation}
where the \textit{trace} of an $n$-periodic pseudo-difference operator $\D$ is given by
\begin{equation}\label{tracedef}
\Tr \left( \sum_{j=k}^\infty a_j T^i\right) := \sum_{i=1}^n a_{0,i}.
\end{equation}
The product \eqref{app:ip} is clearly non-degenerate and invariant in the associative algebra sense, i.e. $\langle \D_1, \D_2\D_3\rangle = \langle \D_1\D_2, \D_3\rangle.$ Furthermore, one can explicitly verify that $ \Tr \D_1\D_2 =  \Tr \D_2\D_1$, so the inner product \eqref{app:ip} is symmetric. Alternatively, this can be showed by using the isomorphism of $\PSIDO{n}$ and the algebra $\gl_n \otimes \R((z))$ of matrices over formal Laurent series (see Remark \ref{app:loopgroup}). In the matrix language, the trace of an operator can be written as
$
\Tr \D = \mathrm{Res}_{z=0}\left(z^{-1}\Tr \, \D(z)\right),
$
where $\D(z)$ is a matrix with coefficients in $\R((z))$ associated to the operator $\D$.
\begin{proof}[Proof of Proposition  \ref{app:mainprop}]
Represent the algebra $\PSIDO{n}$ of $n$-periodic pseudo-difference operators as the sum of three subalgebras $\g_\neg$, $\g_0$, $\g_\pos$ as follows. Let $\PSIDO{n}(J)$ be the vector space of pseudo-difference operators supported in $J \subset \Z$. By definition, an operator of the form  \eqref{app:psido} is supported in $J$ if $a_j \equiv 0$ for all $j \notin J$. Define
$$
\g_\neg := \PSIDO{n}(\Z_\neg) =  \DO{n}{\Z_\neg}, \quad \g_0   : = \PSIDO{n}(\{0\}) =   \DO{n}{\{0\}}, \quad \g_\pos := \PSIDO{n}(\Z_\pos),
$$
where $\Z_\pos$ stands for positive integers and $\Z_\neg$ for negative ones. 
 This decomposition clearly satisfies all the requirements of Proposition \ref{app:gmt}, so we get an $r$-matrix $r: = p_\pos - p_\neg$ and hence a factorizable Poisson-Lie structure on $\IPSIDO{n}$. This proves the first statement of Proposition \ref{app:mainprop}. To prove the second statement (invertible difference operators supported in a subset $J \subset \Z$ consisting of consecutive integers form a Poisson submanifold), we use formula~\eqref{app:gd}. From that formula it follows that, when viewed as map $\PSIDO{n} \to \PSIDO{n}$, the Poisson tensor $\pi_{\D}$ (where $\D \in \IPSIDO{n}$) reads
\begin{equation}\label{app:ho}
\pi_\D(\mathcal Q) = \D r(\mathcal Q\D) - r(\D \mathcal Q)\D.
\end{equation}
To show that the set $\IDO{n}{J} \subset \IPSIDO{n}$ of invertible difference operators supported in $J$ is a Poisson submanifold, one needs to prove that for $\D \in \IDO{n}{J}$ the image of the Poisson tensor \eqref{app:ho} belongs to the tangent space to $\IDO{n}{J}$ at $\D$. The latter is the space $\DO{n}{J}$ of all $n$-periodic difference operators supported in $J$, so we need to show that the operator \eqref{app:ho} is supported in $J$ whenever $\D$ is supported in $J$. To that end, notice that the right-hand side of  \eqref{app:ho} stays the same if $r$ is replaced by $r \pm \Id$. But the image of $r + \Id = 2 p_\pos + p_0$  (where $p_0$ is the projector to $\g_0$) is the space $\g_0 + \g_\pos$ of operators which only have terms of non-negative power in $T$, so,  rewriting \eqref{app:ho} in terms of $r + \Id$, we get that
$$
\min\mathrm{supp}\,\pi_\D(\mathcal Q) \geq \min \mathrm{supp}\,\D = \min J.
$$
Analogously, rewriting \eqref{app:ho} in terms of $r - \Id$, we get 
$
\max\mathrm{supp}\,\pi_\D(\mathcal Q) \leq  \max J.
$
All in all, we have
$
\mathrm{supp}\,\pi_\D(\mathcal Q) \subset [\min J, \max J] = J,
$
as desired.\par

To prove the third statement (if $J$ is one-point set, then the Poisson structure vanishes on operators supported in $J$), notice that if $\D$ is supported in a one-point set, then conjugation by $\D$ preserves the subalgebras $\g_\pm$ and $\g_0$, as well as the inner product on  $\PSIDO{n}$. Therefore, it preserves the $r$-matrix, and $\pi(\D) = 0$ by Proposition \ref{zeroPoisson}.\par

To prove the fourth statement (the left-right action is Poisson), we represent the left-right action~\eqref{action} as a superposition of two actions: one is of the same form, but with periodic $\alpha$ and $\beta$, while the other one is conjugation action $\D \mapsto \gamma \D \gamma^{-1}$, with quasi-periodic $\gamma$. Then the former action is Poisson because the Poisson structure vanishes on sequences, while the latter is Poisson because conjugation by sequences preserves $\g_\pos$, $\g_\neg$, and $\g_0$, as well as the inner product, and hence is Poisson by Proposition~\ref{actionprop}. So, the left-right action~\eqref{action} is also Poisson.
\par
Finally, the last statement of Proposition~\ref{app:mainprop} (central functions Poisson commute) directly follows from Proposition~\ref{cf}. So, Proposition~\ref{app:mainprop} is proved.
\end{proof}
\begin{remark}\label{rem:cf}
As central functions on $\IPSIDO{n}$, one can take expressions of the form
$
f_{ij}(\D) := \Tr {T^{in}\D^j}
$, where $i \in \Z, j \in \Z_\pos$. An alternative way to get the same functions is to consider the matrix-valued Laurent series $\D(z)$ corresponding to the operator $\D$ (cf. Remark \ref{rm:psiloop}), and then take coefficients in $z$ of the spectral invariants of $\D(z)$.
\end{remark}

\subsection{Relation to the $\GL_n$ bracket}\label{sec:gln}
One can compute Poisson brackets of coordinate functions on $\IPSIDO{n}$ using formula \eqref{app:gd2}. The resulting expressions are quite complicated and involve infinite series. However, only finitely many terms of those series are non-zero for every concrete pseudo-difference operator. Moreover, for a \textit{difference} operator whose support is small compared to the period these series simplify to just one term. Below we explain how to compute the brackets in this case by using the standard Poisson-Lie structure on $\GL_n$.\par
Recall that the standard Poisson structure on $\GL_n$ is defined using the construction of Proposition~\ref{app:gmt} with $\g_\neg$, $\g_\pos$, $\g_0$ being the lower nilpotent, upper nilpotent, and the Cartan subalgebra respectively, see e.g. \cite{gekhtman2009poisson}. Explicitly, the brackets of the matrix elements are given by
$$
\{x_{ij}, x_{kl} \} = \frac{1}{2}(\sgn(k-i) + \sgn(l-j))x_{il}x_{kj},
$$
where $\sgn(t)$ is $+1$ if $t> 0$,  $-1$ if $t> 0$, and  $0$ if $t = 0$. In other words, for any matrix entries $a,b,c,d$  located at vertices of a rectangle as shown below:
\begin{equation}
\left.\begin{array}{cccc}   a & \dots & \dots & b  \\  \vdots & &  & \vdots   \\ c & \dots & \dots& d  \end{array}\right.
\end{equation}
we have 
$$
\{a,b\} =\frac{1}{2}ab, \quad  \{a,c\} = \frac{1}{2}ac,  \quad \{a,d\} = bc, \quad  \{b,c\} =0.
$$
Since the relative position of $b$ and $d$ is the same as of $a$ and $c$, while the relative position of $c$ and $d$ is the same as of $a$ and $b$, we also have that $$\{b,d\} = \frac{1}{2}bd, \quad \{c,d\} = \frac{1}{2}cd. $$
We  now explain the relation between the bracket on difference operators and the $\GL_n$ bracket. Consider the algebra $\PSIDO{n}(\N)$, where $\N := \Z_\pos \cup \{0\}$, of $n$-periodic \textit{upper-triangular} pseudo-difference operators. Any such operator 
$
\D =  \sum_{j = 0}^{+\infty}  a_{j} T^j
$
can be represented by a bi-infinite upper-triangular matrix
$$
 \left.\begin{array}{ccccccccc}       & a_{0,i-1}   & a_{1,i-1} & \dots & \\ &  & a_{0,i} &  a_{1,i}  &\dots & &  \\ & &  & a_{0,i+1} & a_{1,i+1} & \dots &  \end{array}\right. .
 $$
 Let $\Phi_i(\D)$ be the $n \times n$ submatrix of this matrix which has the element $a_{0,i}$ in its upper left corner. 
 \begin{proposition}\label{windowLemma}
 Each of the mappings $\Phi_i \colon \PSIDO{n}(\N) \to \Mat_n$ takes the Poisson structure $\pi$ on $ \PSIDO{n}(\N) $ to the standard Poisson structure on $\Mat_n$.
 \end{proposition}
 \begin{remark}
 Technically, we have defined Poisson structures only on invertible pseudo-difference operators and invertible matrices. However, since both pseudo-difference operators and matrices form associative algebras, the Poisson structures in fact extend to non-invertible elements
  (see Remark~\ref{rem:gd}). 
 \end{remark}
  \begin{remark}
This proposition is saying that one can compute Poisson brackets of difference operator coefficients by sliding an $n \times n$ window through the operator matrix. If the support of the operator is not too big compared to the period, then the size of the window is big enough to fit any pair of the coefficients, so all Poisson brackets can be computed in this way.
 \end{remark}
 
 \begin{example}
  Consider the space of operators of the form $a + bT $. This corresponds to bi-infinite bi-diagonal matrices
 $$
 \left.\begin{array}{ccccccccc} \ddots & \ddots \\  & a_{i}   & b_{i}  & \\ & & a_{i+1}   & b_{i+1} & & \\ & & &    \ddots & \ddots  \end{array}\right..
 $$
 If $n = 1$, then we cannot use the $n \times n$ window to compute all the brackets. For $n \geq 2$, Proposition~\ref{windowLemma} gives
 \begin{align}\label{deg1bracket}
  \{a_i,b_i\} = \frac{1}{2}a_i b_i, \quad   \{b_i,a_{i+1}\} = \frac{1}{2}b_ia_{i+1},
\end{align}
while all other brackets are either obtained from these by shift of indices or vanish.
 \end{example}
 \begin{example}
 Consider the space of operators of the form $a + bT + cT^2$. The matrix of such an operator is
 $$
 \left.\begin{array}{ccccccccc}  \ddots   & \ddots& \ddots & & \\  & a_{i}   & b_{i} & c_{i} & \\ & & a_{i+1}   & b_{i+1} & c_{i+1}  & &  \\ & & & a_{i+2}   & b_{i+2} & c_{i+2}  &\\ & & &  &  \ddots & \ddots &\ddots \end{array}\right.
 $$
For $n \geq 3$, Proposition~\ref{windowLemma} gives
 \begin{align}\label{2ordflas}
\begin{aligned}
 \{a_i,&b_i\} = \frac{1}{2}a_i b_i, \quad \{a_i,c_i\} = \frac{1}{2}a_i c_i, \quad \{b_i, c_i\} = \frac{1}{2} b_ic_i, \quad  \{b_i,a_{i+1}\} = \frac{1}{2}b_ia_{i+1},   \\  &\{b_i, b_{i+1}\} = a_{i+1}c_i, \quad \{c_i, b_{i+1}\} = \frac{1}{2} c_i b_{i+1}, \quad
 \{c_i,a_{i+2}\} = \frac{1}{2}c_ia_{i+2}.
\end{aligned}
\end{align}
 \end{example}
 The proof of Proposition \ref{windowLemma} is based on the following lemma.
 \begin{lemma}\label{loopPoisson}
 Consider the space $\Mat_n \otimes \R[[z]]$ of formal matrix power series endowed with the trigonometric $r$-bracket, and the space  $\Mat_n$ endowed with the standard bracket. Then the mapping $$\Phi \colon \Mat_n \otimes \R[[z]] \to \Mat_n, \quad \Phi\left(\sum_{i=0}^\infty A_i z^i\right) := A_0,$$ taking a matrix power series to its constant term is a Poisson map.
 \end{lemma}
\begin{remark}
  The trigonometric $r$-bracket on the space $\Mat_n \otimes \R((z))$ of formal matrix Laurent series is defined using the construction of Proposition~\ref{app:gmt}, where $\g_\pos$ consists of matrix power series with nilpotent upper-triangular constant term, $\g_\neg$ consists of matrix polynomials in $z^{-1}$  with nilpotent lower-triangular constant term, while $\g_0$ is the space of constant diagonal matrices. The invariant inner product on $\Mat_n \otimes \R((z))$ is defined by $$\langle A(z), B(z) \rangle :=  \mathrm{Res}_{z=0}\left(\frac{1}{z}\Tr \, A(z)B(z)\right).$$
  \end{remark}
    \begin{proof}[Proof of Lemma \ref{loopPoisson}]
  The mapping $\Phi $ is well-defined on the whole space $\Mat_n \otimes \R((z))$ and maps both the $r$-matrix and the inner product on the latter space to the corresponding objects on  $\Mat_n$. Also notice that for any function $f \in C^\infty(\Mat_n)$, we have $\grad \Phi^* f \in \Mat_n \otimes \R[[z]]$. Indeed, the function $\Phi^*f$ is constant on the subspace 
  $$\Ker \Phi =  \left\{\sum_{i=1}^\infty A_i z^i\right\},$$
 so $ \grad \Phi^* f \in (\Ker \Phi)^\bot = \Mat_n \otimes \R[[z]]$. Furthermore, since $\Phi$ preserves the inner product, we have $\Phi(\grad (\Phi^* f)(A) ) = \grad f(\Phi(A))$. Now, take two functions $f_1, f_2 \in C^\infty(\Mat_n)$. Then the Poisson bracket of their $\Phi$-pullbacks at a point $A = A(z) \in \Mat_n \otimes \R[[z]]$ is given by formula \eqref{app:gd2}:
\begin{align*}
 \{\Phi^*f_1, \Phi^*f_2\}(A) &=  \frac{1}{2} \left(\vphantom{\int}\langle r(\grad \Phi^*f_1(A)\cdot A), \grad \Phi^*f_2(A)\cdot A\rangle \right. \\ &\qquad\qquad- \left. \vphantom{\int}\langle r(A\cdot \grad \Phi^*f_1(A)), A\cdot\grad \Phi^* f_2(A)\rangle\right).
\end{align*}
Using that both $A$ and the gradients of $f_1, f_2$ belong to $\Mat_n \otimes \R[[z]]$, while the restriction of $\Phi$ to $\Mat_n \otimes \R[[z]]$ is a homomorphism of associative algebras preserving the inner product and the $r$-matrix, this can be rewritten as
\begin{align*}
 \{\Phi^*f_1, \Phi^*f_2\}(A) &=  \frac{1}{2} \left(\vphantom{\int}\langle r(\grad f_1(\Phi(A))\cdot \Phi(A)), \grad f_2(\Phi(A))\cdot \Phi(A)\rangle \right. \\ &\qquad\qquad- \left. \vphantom{\int}\langle r(\Phi(A)\cdot \grad f_1(\Phi(A))), \phi(A)\cdot\grad f_2(\Phi(A))\rangle\right),
\end{align*}
which is exactly the $\Mat_n$ bracket of the functions $f_1, f_2$ at the point $\Phi(A)$. Thus, the mapping $\Phi$ is indeed Poisson, as claimed. 
  \end{proof}
 
 \begin{proof}[Proof of Proposition \ref{windowLemma}]
 In the loop algebra language, the space $\PSIDO{n}(\N)$ of upper-triangular $n$-periodic pseudo-difference operators is the space of formal matrix power series of the form  $A(z) = \sum_{i=0}^\infty A_i z^i,$ where $A_0$ is upper-triangular. The infinite matrix corresponding to such power series is
 $$
 \left.\begin{array}{cccc}       A_0   &A_1 & \dots &  \\  & A_0 &  A_1  &\dots    \end{array}\right.
 $$
 with upper-left corners of $A_0$ blocks located at positions $(jn+1, jn+1)$, $j \in \Z$. Thus, the mapping $\Phi_1$ takes $A(z)$ to $A_0$ and is, therefore, a restriction of the mapping $\Phi $ from Lemma \ref{loopPoisson}. So, $\Phi_1$ is a Poisson map. Furthermore, we have $\Phi_{i+1} = \Phi_i 
 \circ \Ad_T$, where $\Ad_T( \D) := T\D T^{-1}$. Therefore, since $\Ad_T$ is also a Poisson map (by Proposition \ref{app:mainprop}, item 3), it follows that all $\Phi_i$'s are Poisson, as desired.
 \end{proof}

\subsection{The subgroup of sparse operators}
We say that a pseudo-difference operator is \textit{$k$-sparse} if its support is an arithmetic progression with step $k$. For example, the operator $T^{-1} + T + T^3 + T^5 + \dots$ is $2$-sparse. Denote the set of invertible  {$k$-sparse} pseudo-difference operators by $\IPSIDO{n}(k\Z + \ast)$. This is a Lie subgroup of $\IPSIDO{n}$, whose Lie algebra is the space $\PSIDO{n}(k\Z)$ of  pseudo-difference operators supported in $k\Z$.  It is not, however, a Poisson submanifold and hence not a Poisson-Lie subgroup $\IPSIDO{n}$. One can, however, define a different Poisson structure on $\IPSIDO{n}(k\Z + \ast)$, which has all the same properties as the Poisson structure on $\IPSIDO{n}$ described above. More precisely, we have the following:
\begin{proposition}\label{prop:sparse}
There exists a natural Poisson structure $\pi^{(k)}$ on the group $\IPSIDO{n}(k\Z + \ast)$ of invertible  {$k$-sparse} pseudo-difference operators. It has all the same properties as the Poisson structure $\pi$ described in Proposition \ref{app:mainprop}, except for the second property which is replaced by the following: if $J \subset \Z$ is an arithmetic progression with common difference $k$, then $\IDO{n}{J}$ is a Poisson submanifold of  $\IPSIDO{n}(k\Z + \ast)$.
\end{proposition}
\begin{proof}
This Poisson structure is given by the following decomposition of the Lie algebra $\PSIDO{n}(k\Z)$:
$$
\g_\neg := \PSIDO{n}(k\Z_\neg), \quad \g_0   : = \PSIDO{n}(\{0\}), \quad \g_\pos := \PSIDO{n}(k\Z_\pos).
$$
All necessary properties are established in the same way as in the proof of Proposition \ref{app:mainprop}.
\end{proof}
\begin{remark}\label{rem:cbi}
A more constructive way to describe the Poisson structure $\pi^{(k)}$ is as follows. When $n$ and $k$ are coprime, there is a group isomorphism $\IPSIDO{n}(k\Z) \simeq \IPSIDO{n}$  given by the action of $\IPSIDO{n}(k\Z)$ on eventually vanishing sequences whose non-zero entries are contained in an arithmetic progression with common difference $k$. Explicitly, this isomorphism is given by
\begin{equation}\label{cbi}
\sum a_{kj} T^{kj} \mapsto \sum \tilde a_{j} T^{j},
\end{equation}
where $\tilde a_{j,i} = a_{kj, ki}$ (note that this is only an isomorphism when $n$ and $k$ are coprime; otherwise, this map is neither injective nor surjective). The Poisson structure $\pi^{(k)}$ can be defined as the pull-back of the structure $\pi$ by this isomorphism. Furthermore, $\pi^{(k)}$ uniquely extends to the whole group $\IPSIDO{n}(k\Z + \ast)$  if we require that the resulting structure is invariant under multiplication by $T$.  This gives a structure which coincides with the one described in the proof of Proposition \ref{prop:sparse}. 
Furthermore, this construction can also be applied when $n$ and $k$ are not coprime, in which case $\IPSIDO{n}(k\Z)$ is isomorphic to a product of $ m := gcd(n,k)$ copies of $\IPSIDO{n/m}$. The corresponding $m$ maps $\IPSIDO{n}(k\Z) \to \IPSIDO{n/m}$ are given by \eqref{cbi} with $\tilde a_{j,i} = a_{kj, ki + l}$, where $l = 0, \dots, m-1$.

\end{remark}
\begin{example}\label{ex:sparsebracket}
Consider sparse operators of the form $a + bT^2 $. The Poisson bracket $\pi^{(2)}$ on such operators may be obtained from the bracket $\pi$ on operators of the form $a + bT$ using the following mnemonic rule (justified by Remark \ref{rem:cbi}): take the formulas  \eqref{deg1bracket} for brackets on $a + bT$ and replace all indices of the form $i + j$ with $i + 2j$. This gives
 \begin{align}
   \{a_i,b_i\} = \frac{1}{2}a_i b_i, \quad   \{b_i,a_{i+2}\} = \frac{1}{2}b_ia_{i+2}.
\end{align}
\end{example}

\medskip
\section{General refactorization maps associated with pairs of arithmetic progressions}\label{sec3}

\subsection{The main theorem}\label{sec:mainThm}
In this section we describe a class of maps parametrized by pairs of finite arithmetic progressions $J_\pm \subset \Z$ with the same common difference. For disjoint $J_\pm$ these maps coincide with pentagram maps on $J$-corrugated polygons described in Section \ref{sec:geometry}. All these maps, regardless of whether  $J_\pm$ are disjoint, admit a refactorization description in terms of the group $\IPSIDO{n}$ of periodic pseudo-difference operators. As a corollary, all such maps admit an invariant Poisson structure and a  Lax representation with Poisson-commuting spectral invariants. Therefore, one should expect that all these maps are both Liouville and algebraically integrable. In order to actually prove that, one needs to accurately verify certain technical conditions, which is beyond the scope of the present paper.
\par
 Recall that the pentagram maps defined in Section \ref{sec:geometry} act on the space $\P_n(J) \,/\, \PGL$ of twisted $J$-corrugated $n$-gons modulo projective transformations, where $J:= J_+ \sqcup J_-$ is the union of two disjoint finite arithmetic progressions $J_\pm$ with the same common difference. By Proposition \ref{prop:quo}, that space can be identified with the space $\PBDO{n}{J} \, / \,H \ttimes H$ of properly bounded $n$-periodic difference operators supported in~$J$ modulo the left-right action \eqref{action} of the group $H \ttimes H$ of pairs of  non-vanishing $n$-quasi-periodic sequences with the same monodromy. Furthermore, decomposing a difference operator $\D \in \PBDO{n}{J}$ into a sum $\D_+ + \D_-$, where $\D_\pm \in \DO{n}{J_\pm}$ are difference operators supported in $J_\pm$, one can identify a dense subset in the quotient $\PBDO{n}{J} \, / \,H \ttimes H$ with $  \PBDO{n}{J_+} \times \PBDO{n}{J_-}   \, / \,H \ttimes H$, where $H \ttimes H$ acts on both factors by the \textit{simultaneous} left-right action (this identification is only possible for a dense subset of $\PBDO{n}{J} \, / \,H \ttimes H$ because the operators $\D_\pm$ may not be properly bounded even if $\D$ is). Thus, our pentagram maps, considered on sufficiently generic polygons, can be thought of as transformations defined on the left-right quotient $  \PBDO{n}{J_+} \times \PBDO{n}{J_-}  \, / \,H \ttimes H$, with $J_\pm$ being disjoint. Note, however, that the latter quotient is well-defined regardless of whether the sets $J_\pm$ are disjoint. 
 Below we describe certain dynamics on that quotient which in the disjoint case coincides with the pentagram dynamics.\par
 
 An alternative way to think of the space $  \PBDO{n}{J_+} \times \PBDO{n}{J_-}   \, / \,H \ttimes H$ is to identify it with the quotient $\PBDO{n}{J_-}^{-1}\PBDO{n}{J_+} \, /\, \Ad\,H$, where $\PBDO{n}{J_-}^{-1}\PBDO{n}{J_+}$ is the space of \textit{rational} pseudo-difference operators of the form  $ {\D}_-^{-1}{\D}_+$ with $\D_\pm \in  \PBDO{n}{J_\pm} $, and $\Ad\,H$ stands for the conjugation action of the group $H$ of $n$-periodic non-vanishing scalar sequences. The identification between the two spaces is done via the map $\D_\pm \mapsto {\D}_-^{-1}{\D}_+$, which we will show to be almost everywhere bijective. \par

 One last ingredient that we need to state the main result is a Poisson structure on our phase space  $  \PBDO{n}{J_+} \times \PBDO{n}{J_-}   \, / \,H \ttimes H$  (which, in the disjoint case, is the space of polygons). That structure is constructed as follows. Let $k$ be the common difference of $J_\pm$.
Define a Poisson structure on  $\PBDO{n}{J_+}$ as the restriction of the structure $\pi^{(k)}$ on $k$-sparse operators (see Proposition \ref{prop:sparse}). If $k=1$, that is just the standard structure on pseudo-difference operators (see Proposition  \ref{app:mainprop}). Further, on $\PBDO{n}{J_-}$, take the restriction of the same Poisson structure, but \textit{with an opposite sign}. This endows $ \PBDO{n}{J_+} \times \PBDO{n}{J_-} $ with a product Poisson structure. 
Furthermore, the quotient $ \PBDO{n}{J_+} \times \PBDO{n}{J_-}   \, / \,H \ttimes H$ inherits the Poisson structure because the left-right action is Poisson. 
%
%

The following theorem is the main result of the paper.
 \begin{theorem}\label{thm1}
 Let $J_\pm \subset \Z$ be a pair of non-empty finite arithmetic progressions with the same common difference. Consider the space $ \PBDO{n}{J_+} \times \PBDO{n}{J_-} $ of pairs $(\D_+, \D_-)$ of $n$-periodic properly bounded difference operators supported in $J_+, J_-$ respectively. Consider also the multivalued map of that space to itself that assigns to $\D_\pm$ new difference operators $\tilde \D_\pm$ defined by the equation
 \begin{equation}\label{mainRelation2}\tag{1'}
\tilde{\D}_+  {\D}_- = \tilde {\D}_-  {\D}_+.
\end{equation}
Then the following is true.
\begin{enumerate}
\item This map $\D_\pm \mapsto \tilde \D_\pm$ descends to a generically defined single-valued transformation $\Psi_{J_\pm}$ of the quotient $  \PBDO{n}{J_+} \times \PBDO{n}{J_-}   \, / \,H \ttimes H$, where $H \ttimes H$ is the group of pairs of  non-vanishing $n$-quasi-periodic sequences with the same monodromy acting by the simultaneous  left-right action~\eqref{action}.
\item If the progressions $J_\pm$ are disjoint, then the so-obtained map $\Psi_{J_\pm}$ coincides with the pentagram map associated with $J_\pm$.
\item The mapping 
\begin{equation}\label{eq:DivMap}
 \PBDO{n}{J_+} \times \PBDO{n}{J_-}   \, / \,H \ttimes H \to \PBDO{n}{J_-}^{-1}\PBDO{n}{J_+} \, /\, \Ad\,H.
\end{equation}
taking the left-right orbit of a pair $\D_\pm$ to the $H$-conjugacy class of the pseudo-difference operator ${\D}_-^{-1}{\D}_+$ is generically a bijection. This bijection identifies the map $\Psi_{J_\pm}$ with the following refactorization dynamics on conjugacy classes:
 \begin{equation}\label{refact}
\Lax := {\D}_-^{-1}{\D}_+ \, \mapsto \,  {\tilde \Lax} := {\D}_+{\D}_-^{-1}.
\end{equation}
In other words, the mapping $\Psi_{J_\pm}$ has a Lax representation
\begin{equation}\label{eq:lr}
\Lax \,\mapsto \, \D_+ \Lax {\D}_+^{-1}.
\end{equation}
\item The mapping $\Psi_{J_\pm}$ is Poisson. 
\item Suitably normalized central functions on the space of Lax operators $\Lax$ are Poisson commuting first integrals of $\Psi_{J_\pm}$.
\end{enumerate}
 \end{theorem}
 \begin{remark}
Here is what we mean by normalization of central functions. Recall that as central functions on the group $\IPSIDO{n}$ of $n$-periodic pseudo-difference operators one can take functions of the form $f_{ij}(\Lax) := \Tr {T^{in}\Lax^j}$, where $i \in \Z, j \in \Z_\pos$ (see Remark~\ref{rem:cf}). Upon conjugation of $\Lax$ by a quasi-periodic sequence $\alpha \in H$ with monodromy $z$, the function $f_{ij}$ transforms as
 $$
 f_{ij}(\alpha \Lax \alpha^{-1}) =  \Tr {T^{in}\alpha \Lax^j \alpha^{-1}} = z^i \Tr \alpha {T^{in} \Lax^j \alpha^{-1}} = z^i f_{ij}(\Lax).
 $$
 Thus, the functions $f_{ij}$ do not descend to the quotient of Lax operators by the conjugation action of $H$. One can, however, consider Laurent monomials of those functions that are invariant under the $H$-action and hence descend to the quotient. 
 \end{remark}
 \begin{remark}\label{specpar}
As explained in Remark \ref{rm:psiloop}, periodic pseudo-difference operators can be identified with matrices over the field $\R((z))$ of formal Laurent series. Thus, \eqref{eq:lr} can be viewed as a Lax representation valued in $\Mat_n \otimes \R((z))$, i.e. a Lax representation \textit{with spectral parameter} (in fact, since $\Lax$ is defined as a quotient of two difference operators, the corresponding matrix $\Lax(z)$ is not just a formal Laurent series but a rational function of $z$). Note, however, that since $\Lax$ is only defined up to conjugation by quasi-periodic sequences $\alpha \in H$, the corresponding $z$-dependent matrix $\Lax(z)$ is not uniquely defined. Namely, conjugation by periodic sequences translates to conjugation by $z$-independent diagonal matrices, while conjugation by a quasi-periodic sequence $\alpha_t$ defined by $\alpha_{t,i} := t^{\lfloor(i - 1) / n\rfloor}$ (where $\lfloor\, \rfloor$ is the floor function) becomes the action $\Lax(z) \mapsto \Lax(tz)$. Since $H$ is the direct product of periodic operators and the subgroup $\{ \alpha_t \mid t \in \R^*\}$, it follows that the Lax matrix $\Lax(z)$ is defined up to transformations of the form $\Lax(z) \mapsto A\Lax(tz)A^{-1}$, where $A$ is a constant invertible diagonal matrix. 
 \end{remark}
 \begin{example}
 For the classical pentagram map, the matrix Lax representation is given by
\begin{gather}
 \D_+ = \left(\begin{array}{cccccc}a_1 & 0 & c_1 &  &  &  \\ & \ddots & \ddots & \ddots &  &  \\ &  & \ddots & \ddots & \ddots &  \\  &  &  & a_{n-2} & 0 & c_{n-2} \\  c_{n-1}z & &  &  & a_{n-1} & 0 \\  0 & c_nz & &  &  & a_n\end{array}\right), \end{gather} \begin{gather}\D_-=  \left(\begin{array}{cccccc}0 & b_1 & 0 & d_1 &  &  \\ & \ddots & \ddots & \ddots & \ddots  &  \\ &  & 0 & b_{n-3} & 0 & d_{n-3} \\  d_{n-2}z&  &  & 0 & b_{n-2} & 0 \\  0 & d_{n-1}z &  &  & 0 & b_{n-1} \\  b_nz & 0 & d_nz &  &  & 0\end{array}\right),\\
 \mathcal L:= {\D}_-^{-1}{\D}_+, \quad  \tilde \Lax = \D_+ \Lax {\D}_+^{-1},
\end{gather}
where  $\mathcal L$ is determined by a polygon up to conjugation by constant diagonal matrices and rescaling $z \mapsto tz$, and $\D_\pm$ are determined by $\mathcal L$ up to simultaneous left multiplication by diagonal matrices.
 \end{example}
 One can also characterize first integrals of the maps $\Psi_{J_\pm}$ provided by Theorem \ref{thm1} as follows:
  \begin{corollary}\label{cor:int}
 The characteristic polynomial $\CPM{\D_+ + w\D_-}{z}$ of the monodromy of $\D_+ + w\D_-$, defined up to transformations of the form $z \mapsto tz$ and a constant factor, is invariant under the map~$\Psi_{J_\pm}$.
 \end{corollary}
 \begin{proof}[Proof of Corollary \ref{cor:int}]
Let $\D_\pm(z), \Lax(z)$ be the loop group elements corresponding to the operators $\D_\pm$ and $\Lax$ respectively. Then, by Theorem \ref{thm1}, the map $\Psi_{J_\pm}$ preserves the central function
\begin{equation}\label{fracIntegral} \det( \Lax(z) + w\Id) = \det( {\D}_-^{-1}(z){\D}_+(z)  + w\Id) =\frac{\det(\D_+(z) + w\D_-(z) )}{\det \D_-(z) }\end{equation} defined up to transformations of the form $z \mapsto tz$, cf. Remark \ref{specpar}. Using Proposition \ref{prop:detmono}, we can further rewrite this as

\begin{equation}\label{fracIntegral2} \det( \Lax(z) + w\Id) = z^{k}  \frac{\CPM{\D_+ + w\D_-}{z}}{\CPM{\D_-}{z} }, \quad k: = \min(J_- \cup J_+) - \min J_-.\end{equation} 
The fraction in the right-hand side is generically irreducible, so both its numerator and denominator must be preserved by $\Psi_{J_\pm}$, up to a constant factor.
 \end{proof}
 
 \begin{remark}
 In Section \ref{sec:scaling} we use Corollary \ref{cor:int} to show that in the known cases of integrability our first integrals coincide with the known ones. Furthermore, one can show that our Poisson structures also coincide with the familiar ones in those cases where a Poisson structure was previously known, namely for the classical pentagram map, leapfrog map (see Remark \ref{rm:leapfrog} below), as well as for pentagram maps on corrugated polygons. For short-diagonal and dented maps no invariant Poisson structures were previously known. In Section \ref{sec:poisson} we derive, as an example, explicit formulas for the Poisson structure of the short-diagonal map in 3D.
 \end{remark}

 \begin{remark}\label{rm:leapfrog}
 For $J_+ \cap J_- \neq \emptyset$, the geometric meaning of the maps  $\Psi_{J_\pm}$ is not known. The only case which we were able to identify with a familiar integrable system is $J_+ = \{-1,0\}$, $J_- = \{0,1\}$ (and, more generally, $J_+ = \{k-1,k\}$, $J_- = \{k,k+1\}$ which correspond  to the same map up to a shift indices). In that case, the map $\Psi_{J_\pm}$ is the leapfrog map of \cite{Gekhtman2016}, also known as the discrete relativistic Toda lattice~\cite{suris1997some}. The phase space of the leapfrog map is, by definition, the space of pairs of twisted $n$-gons in $\RP^1$ with the same monodromy, considered up to simultaneous projective transformations. One can lift such two polygons to two bi-infinite sequences $V_i^-, V_i$ of vectors in $\R^2$, and then construct two operators $\D_-$ and $\D_+$ supported in $J_+ = \{-1,0\}$ and $J_- = \{0,1\}$ respectively such that $(\D_- + \D_+)V^- = 0$ and $\D_-V^- = V.$ This identifies  the space of pairs of twisted $n$-gons in $\RP^1$ with the same monodromy, considered up to simultaneous projective transformations, with the left-right quotient $  \PBDO{n}{J_+} \times \PBDO{n}{J_-}    \, / \,H \ttimes H$, while the leapfrog map gets identified with the map $\Psi_{J_\pm}$. The only proof of this we were able to find consists of expressing both maps in coordinates.  However, we do believe that it should be possible to directly identify equation \eqref{mainRelation2} with the geometric ``leapfrogging'' definition, similarly to how we identify it with pentagram-type dynamics in the case of disjoint $J_\pm$.
 \end{remark}
 
 To prove Theorem \ref{thm1} we essentially repeat the argument we used to prove Theorem \ref{thm0}, filling in technical details. We begin with a few lemmas.
 
 \begin{lemma}\label{l1}
 Let $\D$, $\D'$ be $n$-periodic difference operators with the same support, and let $\D$ be properly bounded. Assume that $\Ker \D' \supset \Ker \D$. Then there is an $n$-periodic sequence $\alpha$ such that $\D' = \alpha \D$. 
 \end{lemma}
 \begin{proof}
Without loss of generality, assume that $\D$ and $\D'$ are supported in $\{0,\dots, d\}$. Let $\alpha$ be the leading coefficient of $\D'$ divided by the leading coefficient of $\D$. Then the difference operator $ \mathcal R := \D' - \alpha \D$ is supported in $\{0,\dots, d-1\}$ and annihilates the kernel of $\D$. Let us show that such an operator must be zero. Assume $\mathcal R \neq 0$. Then there is $\xi \in \R^\infty$ and $i \in \Z$ such that $(\mathcal R\xi)_i \neq 0$. Further, since $\D$ is properly bounded of degree $d$, there is $\hat \xi \in \Ker \D$ such that $\hat \xi_j = \xi_j$ for all $j \in \{i, \dots, i+d-1\}$. Then, on one hand, since $\Ker \mathcal R \supset \Ker \D$, we have $\mathcal R \hat \xi = 0$. On the other hand, since $\hat \xi_j = \xi_j$ for all $j \in \{i, \dots, i+d-1\}$ and $\mathcal R$ is supported in $\{0,\dots, d-1\}$, we have that $(\mathcal R\hat \xi)_i = (\mathcal R \xi)_i \neq 0$. So we indeed must have $\mathcal R = 0$ and $\D' = \alpha \D$, as desired.
 \end{proof}
  \begin{lemma}\label{l2}
Let $\D_\pm \in \PBDO{n}{J_\pm}$ be operators with trivially intersecting kernels. Assume also that $\tilde \D_\pm \in \PBDO{n}{J_\pm}$ are properly bounded operators satisfying  \eqref{mainRelation2}. Then, for any other operators $\tilde \D_\pm' \in \DO{n}{J_\pm}$ satisfying  \eqref{mainRelation2}, there exists an $n$-periodic sequence $\alpha$ such that $\tilde \D_\pm' = \alpha \tilde \D_\pm$.

 \end{lemma}
\begin{proof}
First assume that $\tilde \D_\pm \in \DO{n}{J_\pm}$ is any solution of  \eqref{mainRelation2}. 
Then, applying both sides of  \eqref{mainRelation2} to any $\xi \in \Ker \D_+$ we get  $\tilde{\D}_+  {\D}_-\xi = 0$, meaning that \begin{equation}\label{eq:dimineq}  \Ker \tilde \D_+ \supset \D_-(\Ker \D_+).\end{equation} 
Now assume that  $\tilde \D_+$ is properly bounded. Then $\dim \Ker \tilde \D_+ = \dim \Ker  \D_+$. At the same time, since $\Ker \D_+ \cap \Ker \D_- = 0$, it follows that $\dim \D_-(\Ker \D_+) = \dim \Ker \D_+$.
So, the dimensions of both sides of \eqref{eq:dimineq} are the same. Thus, for any solution of  \eqref{mainRelation2} we have inclusion \eqref{eq:dimineq} while for properly bounded solutions the inclusion becomes an equality.
That means that if $\tilde \D_\pm$ is a properly bounded solution, and $\tilde \D_\pm' $ is any other solution, then $ \Ker \tilde \D_+' \supset  \Ker \tilde \D_+$. So, by Lemma \ref{l1} we have $\tilde \D_+' = \alpha \tilde \D_+$ for some periodic sequence $\alpha$. 
%
But then, using that both pairs $\tilde \D_\pm$, $\tilde \D_\pm' $ solve  \eqref{mainRelation2}, we get
$$
\tilde \D_-' = \tilde{\D}_+'  {\D}_-\D_+^{-1} = \alpha\tilde{\D}_+ {\D}_-\D_+^{-1} = \alpha \tilde \D_-.
$$
So,  $\tilde \D_\pm' = \alpha \tilde \D_\pm$, as desired. 
\end{proof}
\begin{lemma}\label{l3}

There exists a Zariski open and dense subset $\mathcal A({J_\pm} )\subset \PBDO{n}{J_+} \times \PBDO{n}{J_-} $ such that for any $(\D_+, \D_-) \in\mathcal A({J_\pm} )$ equation \eqref{mainRelation2} admits a solution $(\tilde \D_+, \tilde \D_-) \in  \PBDO{n}{J_+} \times \PBDO{n}{J_-}$. Moreover, this solution is unique up to multiplying both $\tilde \D_+$ and $\tilde \D_-$ by the same periodic sequence on the left.
\end{lemma}
\begin{proof}

 Let ${\mathcal A'} \subset  \PBDO{n}{J_+} \times \PBDO{n}{J_-}$ be the set of  pairs $\D_\pm$  such that \eqref{mainRelation2} has a unique solution $(\tilde \D_+, \tilde \D_-) \in  \DO{n}{J_+} \times \DO{n}{J_-}$  with monic  $\tilde \D_+$ (which means that the leading coefficient of $\tilde \D_+$ is equal to $1$).  This set is Zariski open. Indeed,  solving \eqref{mainRelation2} for  $\tilde \D_\pm$ with $\tilde \D_+$ monic is equivalent to a linear system. The number of indeterminates in that system is the number of unknown coefficients of $\tilde \D_\pm$ multiplied by the period, that is $n\left(|J_+| + |J_-| - 1\right)$. At the same time, since both sides of~\eqref{mainRelation2} are operators supported in the Minkowski sum $J_+ + J_-$, the number of equations is $n|J_+ + J_-|$, which, for two arithmetic progressions with the same common difference is also equal to $n\left(|J_+| + |J_-| - 1\right)$. So, we have a linear system where the number of unknowns is the same as the number of equations, and uniqueness of the solution is equivalent to non-vanishing of the determinant. That determinant is a polynomial in coefficients of $\D_\pm$, so the set ${\mathcal A'} $ is Zariski open. \par
 Now, define $\mathcal A \subset {\mathcal A'} $ as the set of those pairs $\D_\pm$ that belong to ${\mathcal A'} $ and have the property that the unique solution  $(\tilde \D_+, \tilde \D_-)$ of  \eqref{mainRelation2}  with monic $\tilde \D_+$ belongs to $\PBDO{n}{J_+} \times \PBDO{n}{J_-}$. 
This set is also Zariski open. Indeed, as we just saw, the unique solution of  \eqref{mainRelation2}  with monic $\tilde \D_+$ comes from an $m \times m$ linear system with coefficients being polynomials in $\D_\pm$. So, the solution is properly bounded when certain rational functions do not vanish, which means $\mathcal A$ is Zariski open in $ {\mathcal A'}$ and hence in $ \PBDO{n}{J_+} \times \PBDO{n}{J_-} $.\par
We now show that the set $\mathcal A$ is not empty. To that end, assume that  $\D_\pm  \in \PBDO{n}{J_\pm}  $ are operators with constant coefficients such that $\Ker \D_+ \cap \Ker \D_- = 0$ and $\D_+$ is monic. Then, since $\D_\pm$ have constant coefficients, they commute with each other, which is equivalent to saying that $\tilde \D_\pm := \D_\pm$ solve~\eqref{mainRelation2}. That solution has monic  $\tilde \D_+$ and, moreover, there are no other monic solutions. Indeed, by Lemma~\ref{l2} any other solution must be of the form $\tilde \D_\pm = \alpha \D_\pm$,  so if   $\tilde \D_+$ is monic then $\alpha = 1$ and $\tilde \D_\pm = \D_\pm$. Therefore, for $\D_\pm$ as described, equation ~\eqref{mainRelation2} admits  a unique solution with $\D_+$ is monic, and that solution is properly bounded. But that means $(\D_+, \D_-) \in \mathcal A$, as desired. \par
So, the set $\mathcal A$ is Zariski open and non-empty, hence open and dense. Now, to complete the proof, it suffices to show that for any $(\D_+, \D_-) \in\mathcal A({J_\pm} )$ the solution $(\tilde \D_+, \tilde \D_-) \in  \PBDO{n}{J_+} \times \PBDO{n}{J_-}$ of equation \eqref{mainRelation2} is unique up to multiplying both $\tilde \D_\pm$ by the same periodic sequence on the left. Indeed, if there were two solutions not related in this way, then dividing them by the leading term of $\tilde \D_+$ we would obtain two different solutions with monic $\tilde \D_+$. But that is not possible by construction of $\mathcal A$. Thus, the set $\mathcal A({J_\pm} ):= \mathcal A$ satisfies our requirements. 
\end{proof}
\begin{lemma}\label{l4}

There exists a Zariski open and dense subset $\mathcal B({J_\pm} ) \subset \PBDO{n}{J_+} \times \PBDO{n}{J_-} $ such that for any $(\D_+, \D_-) \in \mathcal B({J_\pm} )$ the pseudo-difference operator $ {\D}_-^{-1}{\D}_+$ has a unique representation as a left quotient of operators supported in $J_\pm$, up to multiplying both $\D_+$ and $\D_-$ by a periodic sequence on the left.

\end{lemma}
\begin{proof}
Recall that \textit{duality} $\D \mapsto \D^*$ is an anti-automorphism of the algebra of difference operators that is uniquely determined by requiring that a scalar sequence is self-dual, and $T^* = T^{-1}$. In other words, the dual of an operator $\D = \sum a_i T^i$ is $$\D^* = \sum T^{-i}a_i = \sum \tilde a_iT^{-i}, \quad \tilde a_{i,j} = a_{i, j - i}. $$ 
This corresponds to transposition of the operator matrix and can be thought of as operator duality with respect to the formal $L^2$ inner product on $\R^\infty$. Also note that the dual of a properly bounded operator supported in $J \subset \Z$ is a properly bounded operator supported in $-J := \{-j \mid j\in J\}$. \par
Let  $\star \colon \PBDO{n}{J_+} \times \PBDO{n}{J_-} \to  \PBDO{n}{-J_+} \times \PBDO{n}{-J_-} $ be the map that takes $(\D_+, \D_-)$ to $(\D_+^*, \D_-^*)$. Let also $\psi_{J_\pm} \colon \mathcal A({J_\pm} )\to \PBDO{n}{J_+} \times \PBDO{n}{J_-} $ be the map which takes a pair $(\D_+, \D_-)$ to the unique monic properly bounded solution $(\tilde \D_+, \tilde \D_-)$ of equation~\eqref{mainRelation2}. As we saw in the proof of Lemma~\ref{l3}, that is a rational map. Consider now the map
$$
\psi_{J_\pm}^* := \star \circ \psi_{-J_\pm} \circ \star \colon \mathcal A({-J_\pm} )^* \to  \PBDO{n}{J_+} \times \PBDO{n}{J_-},
$$
where $\mathcal A({-J_\pm} )^* := \star(\mathcal A({-J_\pm} ))$, and let $\mathcal B := (\psi_{J_\pm}^*)^{-1}(\mathcal A_{J_\pm})$. Then $\mathcal B$ is Zariski open as the preimage of a Zariski open set under a rational map. Furthermore, $\mathcal B$ is non-empty. Indeed, let $\D_\pm \in  \PBDO{n}{-J_\pm} $ be operators with constant coefficients such that $\Ker \D_+ \cap \Ker \D_- = 0$ and $\D_+$ is monic. Then $(\D_+, \D_-) \in \mathcal A({-J_\pm} )$, so $(\D_+^*, \D_-^*) \in  \mathcal A({-J_\pm} )^*$. Furthermore, we have
$$
\psi_{J_\pm}^* (\D_+^*, \D_-^*) = (\D_+^*, \D_-^*),
$$
so $(\D_+^*, \D_-^*) \in \mathcal B$. Thus, the set $\mathcal B({J_\pm} ) := \mathcal B$ is Zariski open and non-empty, and to complete the proof it suffices to show it satisfies the unique factorization requirement. To that end, assume that $(\D_+, \D_-) \in \mathcal B$, and ${\D}_-^{-1}{\D}_+ = ({\D}_-')^{-1}{\D}_+'$
 for some operators $\D_\pm' \in \PBDO{n}{J_\pm}$. Since  $(\D_+, \D_-) \in \mathcal B$, we have that  $(\D_+^*, \D_-^*) \in \mathcal A({-J_\pm})$, meaning there exists a unique pair of operators $\tilde \D_\pm \in \PBDO{n}{-J_\pm}$ with $\tilde \D_+$ monic such that
\begin{equation}\label{eq:predual}
\tilde \D_+ \D_-^* = \tilde \D_- \D_+^*.
\end{equation}
Moreover, by definition of  $\mathcal B$  we have $(\tilde \D_+^*, \tilde \D_-^*) \in\mathcal A({J_\pm} )$. Now, taking the dual of \eqref{eq:predual} we get
\begin{equation}\label{eq:dual}
 \D_- \tilde \D_+^* =  \D_+ \tilde \D_-^* \, \Rightarrow \,\tilde \D_+^*(\tilde \D_-^*)^{-1} =  \D_-^{-1}\D_+ = ({\D}_-')^{-1}{\D}_+' \, \Rightarrow \, \D_-' \tilde \D_+^* =  \D_+' \tilde \D_-^*.
\end{equation}
So, since $(\tilde \D_+^*, \tilde \D_-^*) \in\mathcal A({J_\pm} )$, by Lemma \ref{l3} we have $\D'_\pm = \alpha \D_\pm$ for some periodic sequence $\alpha$, as required.
\end{proof}
 
 \begin{proof}[Proof of Theorem \ref{thm1}] We begin with the first statement of the theorem (equation \eqref{mainRelation2} defines a generically single-valued map $\Psi_{J_\pm}$ of the left-right quotient to itself). By Lemma \ref{l3}, for generic $\D_\pm  \in \PBDO{n}{J_\pm}  $ equation \eqref{mainRelation2} has a solution  $\tilde \D_\pm  \in \PBDO{n}{J_\pm}  $ which is unique up to multiplying both $\tilde \D_\pm$ on the left by some $n$-periodic sequence $\alpha$. Thus, that equation defines a generically defined and generically single-valued map from the space
 $
  \PBDO{n}{J_+} \times \PBDO{n}{J_-} 
 $
 to its left quotient by the group $H_0 := \IDO{n}{\{0\}} \subset H$ of non-vanishing $n$-periodic sequences. To show that this map descends to the left-right quotient, it suffices to check that if the preimages are in the same left-right orbit, then so are the images (note that the left-right action is still defined on the left quotient by $H_0$, although it is not faithful). Assume that \eqref{mainRelation2} takes a pair $\D_\pm$ to the $H_0$-orbit of $\tilde \D_\pm$. Take another element of the left-right orbit of $\D_\pm$. That has the form $\alpha \D_\pm \beta^{-1}$ for some quasi-periodic sequences $\alpha$, $\beta$ with the same monodromy. Then  \eqref{mainRelation2} has a solution given by  $\beta \tilde \D_\pm \alpha^{-1}$. So, indeed elements of the same left-right orbit are mapped to elements of the same left-right orbit, proving the first statement of the theorem.

The proof of the second statement (for disjoint $J_\pm$ the maps $\Psi_{J_\pm}$ coincide with pentagram maps on $J$-corrugated polygons) repeats, word for word, the proof of the corresponding part of Theorem \ref{thm0}, so we proceed to the third statement (the map $\Psi_{J_\pm}$ can be identified with refactorization dynamics on rational operators). First, we need to show that the map \eqref{eq:DivMap} given by  $\D_\pm \mapsto {\D}_-^{-1}{\D}_+$ is generically a bijection. It is clearly surjective by definition of the codomain, so it suffices to prove injectivity. That is, we need to show that if  $ {\D}_-^{-1}{\D}_+$ is $H$-conjugate to  $({\D'}_-)^{-1}{\D}_+'$, then for generic $\D_\pm$ the pairs $\D_\pm$ and $\D_\pm'$ are in the same left-right orbit. To that end, assume that
$$
{\D}_-^{-1}{\D}_+ = \alpha^{-1}({\D}_-')^{-1}{\D}_+'\alpha = ({\D}_-'\alpha)^{-1}{\D}_+'\alpha
$$
for some periodic sequence $\alpha \in H$. Then, for generic $\D_\pm$, by Lemma \ref{l4} we have $\D_\pm' \alpha = \beta \D_\pm$. But that precisely means that the pairs $\D_\pm$ and $\D_\pm'$ are in the same left-right orbit, as desired. \par
Now that we know that  \eqref{eq:DivMap} is a bijection, we show that it identifies  $\Psi_{J_\pm}$ with refactorization dynamics. Indeed, \eqref{mainRelation2} is equivalent to \begin{equation}\label{refact2}\tag{2'} \tilde {\D}_- ^{-1}\tilde{\D}_+  =  {\D}_+ {\D}_-^{-1}, \end{equation} which precisely means that the operator 
$\tilde \Lax :=  {\tilde \D}_-^{-1}{\tilde \D}_+$ associated with $\tilde \D_\pm$ is obtained from the operator $\Lax := {\D}_-^{-1}{\D}_+$ associated with $ \D_\pm$ by means of refactorization { \eqref{refact}}.

To prove the fourth statement (the mapping $\Psi_{J_\pm}$ is Poisson), depict \eqref{refact2} as the following commutative diagram:
\def\clap#1{\hbox to 0pt{\hss#1\hss}}
\def\mathllap{\mathpalette\mathllapinternal}
\def\mathrlap{\mathpalette\mathrlapinternal}
\def\mathclap{\mathpalette\mathclapinternal}
\def\mathllapinternal#1#2{%
\llap{$\mathsurround=0pt#1{#2}$}}
\def\mathrlapinternal#1#2{%
\rlap{$\mathsurround=0pt#1{#2}$}}
\def\mathclapinternal#1#2{%
\clap{$\mathsurround=0pt#1{#2}$}}
 \begin{equation}\label{diagramma}
\centering
\begin{tikzcd}[ row sep = huge, column sep = tiny]
 \PBDO{n}{J_+} \times \PBDO{n}{J_-}    \, / \,H \ttimes H \arrow[swap,dr,end anchor={[yshift=1ex]},"{{\D}_+{\D}_-^{-1}}"]\arrow{rr}{\Psi_{J_\pm}}& &  \PBDO{n}{J_+} \times \PBDO{n}{J_-}    \, / \,H \ttimes H\arrow[dl,end anchor={[yshift=1ex]},"{{\D}_-^{-1}{\D}_+}"] \\
&   \hphantom{\quad} \clap{$\PBDO{n}{J_-}^{-1}\PBDO{n}{J_+} \, /\, \Ad\,H$} \hphantom{\quad}&
\end{tikzcd}
\end{equation}
Note that the left diagonal arrow is well-defined because by Lemma \ref{l3} almost every right quotient ${\D}_+{\D}_-^{-1}$ can be rewritten as a left quotient  ${\tilde \D}_-^{-1}{\tilde \D}_+$, so $$\PBDO{n}{J_+}\PBDO{n}{J_-}^{-1} = \PBDO{n}{J_-}^{-1}\PBDO{n}{J_+},$$ up to Zariski closed subsets (Lemma \ref{l3} is only one containment direction, while the opposite one can be proved by applying the lemma to dual operators, as in Lemma~\ref{l4}). 
Furthermore, the diagonal arrows are Poisson, since multiplication in $\IPSIDO{n}$ is Poisson, inversion is anti-Poisson, and the Poisson structure on the space $ \PBDO{n}{J_+} \times \PBDO{n}{J_-} $ of pairs of operators is defined by reversing the structure on the factor corresponding to $\D_-$. Also notice that by item 3 the right diagonal arrow is generically invertible. So, $\Psi_{J_\pm}$ is a composition of Poisson maps and hence Poisson, as stated.\par
Finally, we prove the fifth statement (central functions of $\Lax$ are Poisson-commuting first integrals of the map $\Psi_{J_\pm}$). Central functions on $\IPSIDO{n}$ applied to $\Lax$ are preserved by the map $\Psi_{J_\pm}$ due to representation \eqref{eq:lr} so it suffices to prove that they commute.  More precisely, we need to establish Poisson commutativity for the pull-backs of central functions on  $\IPSIDO{n}$ by the map~\eqref{eq:DivMap}. But that follows from commutativity of central functions on $\IPSIDO{n}$ along with the fact that \eqref{eq:DivMap} is a Poisson map (proved in item 4). So, Theorem \ref{thm1} is proved.
 \end{proof}

\subsection{Scaling invariance}\label{sec:scaling}
Most of the known constructions of first integrals and Lax representations for pentagram-type maps are based on \textit{scaling symmetries}. A scaling symmetry is a $1$-parametric group of transformations of the polygon space which commutes with the pentagram map. In most cases such symmetries were guessed by studying explicit formulas for the corresponding map, and their geometric meaning is not known. The aim of this section is to show that the scaling symmetry is an immediate corollary of our construction.

%
\begin{proposition}\label{prop:scaleDef}
The map $\Psi_{J_\pm}$, described in Theorem \ref{thm1}, commutes with a $1$-parametric group $R_w$ of transformations which is defined, in terms of difference operators, as
\begin{equation}\label{scaleDef}
 \D_+, \D_- \mapsto \D_+, w \D_-.
\end{equation}
In terms of the Lax operator, this transformation is simply rescaling:
\begin{equation}\label{scaleDef2}
\Lax(z) \mapsto w^{-1}\Lax(z).
\end{equation}
\end{proposition}
\begin{remark}
Transformation \eqref{scaleDef} commutes with the left-right $H \ttimes H$ action \eqref{action} (while \eqref{scaleDef2} commutes with the conjugation action) and hence can be viewed as a map from the space $ \PBDO{n}{J_+} \times \PBDO{n}{J_-}    \, / \,H \ttimes H$ (which is where the map $\Psi_{J_\pm}$ is defined) to itself. 
\end{remark}
\begin{proof}[Proof of Proposition \ref{prop:scaleDef}]
Indeed, the defining equation \eqref{mainRelation2} of the map $\Psi_{J_\pm}$ is invariant under the transformation $\D_- \mapsto w\D_-$, $\tilde \D_- \mapsto w\tilde \D_-$, while the Lax form  \eqref{eq:lr} is invariant under rescaling. \end{proof}
\begin{proposition}
In the case of the classical pentagram map, as well as in short-diagonal and dented cases, transformations $R_w$ defined in Proposition \ref{prop:scaleDef} coincide with scaling transformations introduced for these maps in \cite{schwartz2008discrete, khesin2013, khesin2016}.
\end{proposition}
\begin{proof}
The proof is achieved by introducing coordinates on the polygon space and rewriting the scaling symmetry in those coordinates. As an example, let us consider short-diagonal maps in $\RP^{2k}$ (the proof in other cases is analogous). This corresponds to $J_+ = \{0,2,4, \dots, 2k\} $, $J_- = \{1,3,5, \dots, 2k+1\}$ (see Table \ref{tab:pentmaps}). The phase space of the associated short-diagonal map is the space $\P_n(J) \, / \, \PGL$, with $J = J_+ \sqcup J_- = \{0,\dots, 2k+1\}$, of arbitrary (twisted) polygons in $\RP^{2k}$,  modulo projective equivalence. In terms of difference operators, it is the space of operators supported in $J$ and considered modulo the left-right action  \eqref{action}  of $H \ttimes H$. As can be seen from \cite[Section 3.2]{khesin2013}, as well as from \cite[Section 8.2]{morier2014linear}, if $gcd(2k+1, n)= 1$, then every orbit of the $H \ttimes H$ action has a unique representative of the form
\begin{equation}\label{normdo}\D = 1 + \sum_{j = 1}^{2k}a_j T^j - T^{2k+1}.\end{equation}
Thus, one can take entries of the sequences $a_j, j = 1, \dots, 2k$, as coordinates on the polygon space. To write our scaling transformation $R_w$ in these coordinates, we need to apply it to operator \eqref{normdo}, which gives
\begin{equation}\label{normdoscaled} \D' = 1 + \sum_{j = 1}^{2k-1} w a_j T^j + \sum_{j = 2}^{2k}  a_j T^j  - wT^{2k+1},\end{equation}
and then normalize, i.e. find an operator  $\tilde \D$ of the form \eqref{normdo} which belongs to the same orbit of the $H \ttimes H$ action as~\eqref{normdoscaled}. Note that since the constant term of $ \D'$ is already of necessary form, it remains to normalize the coefficient of $T^{2k+1}$, which can be done using only the conjugation action of $H$. The condition for $\alpha  \D' \alpha^{-1}$, where $\alpha \in H$, to have coefficient of $T^{2k+1}$ equal to $-1$ is
$
{\alpha_{i+2k+1}}{\alpha_i^{-1}} = w.
$
This has a quasi-periodic solution $\alpha_i = \lambda^i$, where $\lambda$ is such that $\lambda^{2k + 1} = w$. Computing $\tilde \D = \alpha  \D' \alpha^{-1}$ with such $\alpha$, we find that its coefficients $\tilde a_k$ are given by
 $$\tilde a_j = w  \lambda^{-j} a_j= \lambda^{2k + 1 - j} a_j$$ when $j$ is odd, and  $$\tilde a_j = \lambda^{-j} a_j$$ when $j$ is even. Upon a parameter change $s = \lambda^{-2} = w^{-\frac{2}{2k+1}}$, this coincides with formulas for the scaling given in \cite[Section 9]{khesin2013}.
%
\end{proof}
\begin{remark}
In \cite{khesin2013}, the invariance of short-diagonal maps under scaling was only established in dimensions $\leq 6$, while the general case was proved in~\cite{beffa2015}. With our definition, the invariance of pentagram maps under scaling is immediate.
\end{remark}
\begin{corollary}
For the classical, as well as short-diagonal and dented maps, first integrals obtained from our construction coincide with the ones obtained in \cite{ovsienko2010pentagram, khesin2013, khesin2016}.
\end{corollary}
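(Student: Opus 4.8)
The plan is to identify both families of first integrals with the coefficients of one and the same two-variable spectral polynomial: the characteristic polynomial of the monodromy of the polygon obtained from $P$ by applying the scaling symmetry. Recall that in \cite{ovsienko2010pentagram, khesin2013, khesin2016} the first integrals are constructed in exactly this way. One applies the scaling symmetry with parameter $w$ to a twisted polygon $P$, obtaining a one-parameter family of polygons $P_w$ (more precisely, of points in the relevant coordinate space) with monodromies $M_w$, and the first integrals are the suitably normalized coefficients of the polynomial $\det(M_w - z\,\Id)$, viewed as a polynomial in the two variables $z$ and $w$. Hence it is enough to show that this polynomial agrees, up to the transformations $z \mapsto tz$ and multiplication by a nonzero constant, with $\det(\D_+(z) - w\D_-(z))$, since the normalized coefficients of the latter are precisely the first integrals furnished by Corollary \ref{integrals}.

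To prove this agreement I would argue in two steps. First, combine the scaling-invariance Proposition of Section \ref{sec3} with the identification of our scaling symmetry with the classical one to conclude that, under the correspondence of Proposition \ref{prop:quo}, the scaled polygon $P_w$ corresponds to the deformed difference operator $\D_w = \D_+ + w\D_-$ --- up to the obvious reparametrization of the scaling parameter and constant rescalings coming from the differing normalizations adopted in the cited papers. Second, apply Proposition \ref{monodet} to the operator $\D_w$: the eigenvalues of $M_w$ are exactly the roots of $\det\D_w(z) = \det(\D_+(z) + w\D_-(z))$. Thus, for each fixed $w$, the polynomials $\det(M_w - z\,\Id)$ and $\det(\D_+(z) + w\D_-(z))$ have the same roots in $z$; since, as noted in the remark following Proposition \ref{monodet}, both are defined only up to $z \mapsto tz$ and a constant factor, they coincide in the corresponding quotient, and therefore so do their coefficient functions on $\P_n(J)$, which are the two families of first integrals.

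The conceptual content of this argument is modest; the actual work consists of two bookkeeping tasks, to be carried out separately for the classical, short-diagonal, and dented cases. The first --- already announced after the scaling-invariance Proposition --- is to introduce explicit coordinates on $\P_n(J)$ (the cross-ratio-type coordinates of \cite{ovsienko2010pentagram} in the classical case, and their multidimensional analogues from \cite{khesin2013, khesin2016}), to write both our scaling symmetry $\D \mapsto \D_+ + w\D_-$ and the scaling symmetry of the cited papers in those coordinates, and to verify that they define the same one-parameter group. The second is to make the phrase ``suitably normalized'' in Corollary \ref{integrals} explicit and compare it with the normalizations used in \cite{ovsienko2010pentagram, khesin2013, khesin2016}, so that the coefficient functions themselves --- not merely the subrings they generate --- are seen to coincide. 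I expect the coordinate computation of the first task to be the main obstacle, since it is precisely there that the geometric scaling symmetry introduced here must be reconciled with the formula-based scaling symmetries of the earlier works; once this is done, the identification of the spectral polynomials, and hence of the integrals, follows from Propositions \ref{prop:quo} and \ref{monodet} as above.
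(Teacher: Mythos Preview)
Your proposal is correct and follows essentially the same approach as the paper: apply Proposition~\ref{monodet} to the scaled operator $\D_w = \D_+ + w\D_-$ to identify $\det(\D_+(z) \pm w\D_-(z))$ with the characteristic polynomial of the monodromy of the scaled polygon, which is precisely how the first integrals in \cite{ovsienko2010pentagram, khesin2013, khesin2016} are defined. The paper's proof is a two-line version of your argument, deferring (as you anticipate) the coordinate verification that the two scaling symmetries coincide to a separate remark.
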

\begin{proof}
Indeed, according to Corollary \ref{cor:int}, our integrals can be interpreted as spectral invariants of the monodromy for polygons obtained from the initial one by means of scaling $R_w$. But this is exactly the definition of first integrals in \cite{ovsienko2010pentagram, khesin2013, khesin2016}.
\end{proof}
\begin{remark}
For corrugated maps  of \cite{Gekhtman2016} our first integrals also coincide with the known ones. In fact, one can show more: for these maps, our refactorization description \eqref{refact} is equivalent to the one given in \cite[Proposition 4.10]{Gekhtman2016}. The refactorization description of \cite{Gekhtman2016} looks more complicated because it is given in terms of actual loop group elements (equivalently, pseudo-difference operators) $A_1(z)$, $A_2(z)$, as opposed to elements of the quotient by the $H \ttimes H$ action. Rewriting refactorization on the quotient as operator refactorization involves choosing a section of the action, which complicates the resulting formulas.
\end{remark}

\subsection{Poisson brackets for the short-diagonal map in 3D}\label{sec:poisson}
In this section we derive explicit formulas for Poisson brackets preserved by the short-diagonal pentagram map in 3D. The corresponding sets $J_\pm$ are $J_+ = \{-2,0,2\}$, $J_- = \{-1,1\}$ (the choice $J_+ = \{0,2,4\}$, $J_- = \{1,3\}$ indicated in Table \ref{tab:pentmaps} leads to the same map up to the shift of indices and hence gives rise to the same Poisson bracket). The phase space  of the associated map is the space of all twisted $n$-gons in $\RP^3$ modulo projective equivalence. We coordinatize that space as in  \cite[Section 5.2]{khesin2013}, namely we assign to a twisted $n$-gon $\{v_i \in \RP^3\}$ three periodic $n$-sequences $x_i, y_i, z_i$ defined as the following negative cross-ratios:
\begin{align*}
x_i &:= -[v_{i+4}, v_{i+5}, \langle v_i, v_{i+1}, v_{i+2} \rangle \cap \langle v_{i+4}, v_{i+5} \rangle, \langle v_{i+1}, v_{i+2}, v_{i+3} \rangle \cap \langle v_{i+4}, v_{i+5} \rangle],\\
&y_i := -[v_{i}, v_{i+1},  \langle v_{i+2} ,v_{i+3}, v_{i+4} \rangle \cap  \langle v_i, v_{i+1}\rangle,  \langle  v_{i+2} ,v_{i+4}, v_{i+5} \rangle \cap \langle v_i, v_{i+1}\rangle],\\
z_i &:=-[v_{i+4}, v_{i+5}, \langle v_i, v_{i+1}, v_{i+3} \rangle \cap \langle v_{i+4}, v_{i+5} \rangle, \langle v_{i+1}, v_{i+2}, v_{i+3} \rangle \cap \langle v_{i+4}, v_{i+5} \rangle].
\end{align*}
\begin{proposition} In these coordinates, the Poisson structure for the short-diagonal pentagram map in $\RP^3$ takes the following form:
\begin{multline}\label{sdb}
\begin{aligned}
\{ x_i, x_{i+1} \} = x_ix_{i+1}, \quad
&\{ x_i, x_{i+2} \} = x_ix_{i+2}w_{i+1},\quad \{ y_i, y_{i+2} \} = y_iy_{i+2}w_{i+1}, \quad \{z_i,z_{i+2}\} = z_iz_{i+2}w_i \\
&\{ x_i, y_{i-2} \} = x_iy_{i-2}w_{i-1},\quad
\{ x_i, y_{i+2} \} = -x_iy_{i+2}w_{i+1},\\
\{ x_i, z_{i-1} \} = x_i&z_{i-1}(w_{i-1}-1), \quad
\{ x_i, z_{i+1} \} = x_iz_{i+1}, \quad
\{ x_i, z_{i+3} \} = -x_iz_{i+3}w_{i+1},\\
\{ y_i, z_{i-1} \} = y_i&z_{i-1}(1-w_{i-1}), \quad
\{ y_i, z_{i+1} \} = -y_iz_{i+1}, \quad
\{ y_i, z_{i+3} \} = y_iz_{i+3}w_{i+1},
\end{aligned}
\end{multline}
where $w_i := y_{i+1}z_i$. 
\end{proposition} \begin{proof}
A direct computation shows that for any difference operator $\D = aT^{-2} + bT^{-1} + c + dT + eT^2$ representing the polygon $\{v_i\}$, the coordinates $x_i, y_i, z_i$ can be expressed in terms of coefficients of $\D$ as follows:
 \begin{align}\label{xyz}
x_{i-2} = -\frac{c_{i+1}e_i}{d_id_{i+1}}, \quad y_{i-2} = -\frac{a_{i+1}d_i}{b_i c_{i+1}}, \quad z_{i-2} = -\frac{b_{i+1}e_i}{c_i d_{i+1}} .
\end{align}
The Poisson bracket between coefficients of $\D$ is, by construction, the product bracket corresponding to the decomposition $\D = \D_+ + \D_-$, where $\D_+ = aT^{-2} + c + eT^2$, $\D_- = bT^{-1} + cT$. The bracket on operators $\D_+$ is defined as the restriction of the bracket $\pi^{(2)}$ on $2$-sparse operators, while the $\D_-$ part is endowed with the \textit{negative} of that bracket. Similarly to Example \ref{ex:sparsebracket}, we get
 \begin{align}
\begin{aligned}
 \{a_i,c_i&\} = \frac{1}{2}a_i c_i, \quad \{a_i,e_i\} = \frac{1}{2}a_i e_i, \quad \{c_i, e_i\} = \frac{1}{2} c_ie_i, \quad  \{c_i,a_{i+2}\} = \frac{1}{2}c_ia_{i+2},   \\  &\{c_i, c_{i+2}\} = a_{i+2}e_i, \quad \{e_i, c_{i+2}\} = \frac{1}{2} e_i c_{i+2}, \quad
 \{e_i,a_{i+4}\} = \frac{1}{2}e_ia_{i+4}
\end{aligned}
\end{align}
and
\begin{align}
   \{b_i,d_i\} = -\frac{1}{2}b_i d_i, \quad   \{d_i,b_{i+2}\} = -\frac{1}{2}b_ia_{i+2}.
\end{align}
It now remains to compute the brackets of functions \eqref{xyz} using these formulas. This is done by a straightforward calculation.
\end{proof}
\begin{remark}
As shown in \cite[Theorem 5.6]{khesin2013}, the short-diagonal map in $xyz$-coordinates reads
\begin{align*}
\tilde x_i = x_{i+1}\frac{\alpha_i}{\beta_i},\quad
\tilde y_i = \frac{x_{i-1}y_{i-2}z_i}{x_iz_{i-1}}\frac{\beta_{i-1}\beta_{i+2}}{\alpha_i\beta_{i+1}},\quad
\tilde z_i = \frac{x_{i+1}z_i}{x_i}\frac{\beta_{i-1}\beta_{i+2}}{\alpha_{i-1}\beta_{i}},
\end{align*}
where
$$
\alpha_i := 1 + y_{i-1} + z_{i+2} + y_{i-1}z_{i+2} - y_{i+1}z_i, \quad \beta_i := 1 + y_{i-1} + z_{i}. 
$$
It follows from our construction that this map preserves the above bracket. This can of course be verified with a computer algebra system.
\end{remark}

\subsection{Refactorization and Y-meshes}\label{sec:meshes}
In this section we outline the connection between the refactorization description of higher pentagram maps and the description in terms of Y-meshes given in \cite{glick2015}. Although we only consider the example of a short-diagonal pentagram map in $\RP^3$, it is quite likely that all the same arguments work for more general maps in any dimension. \par
Let us briefly recall the Y-mesh description of the short-diagonal map from \cite{glick2015}. A \textit{Y-pin} $S$ is four distinct points $S=\{a,b,c,d \in \Z^2\}$, satisfying certain technical conditions. Given a Y-pin  $S=\{a,b,c,d\}$, a \textit{Y-mesh of type $S$ and dimension $d$} is a map $v \colon \Z^2 \to \Proj^d$ such that the points $v(r + a)$, $v(r + b)$, $v(r + c)$, $v(r + d)$ are collinear for any $r \in \Z^2$. One can view any Y-mesh as a polygon depending on a discrete time variable $t \in \Z$. By definition, the $i$'th vertex of the polygon at time $t$ is given by $v(i,t)$. In what follows, we will only consider Y-meshes such that $v(i + n,t) = \phi(v(i ,t) )$ for a fixed projective transformation $\phi$. In other words, we assume that all the polygons defined by the Y-mesh are twisted $n$-gons with the same monodromy.\par
\begin{figure}[b]
\centering
\begin{tikzpicture}[, scale = 1]
    \foreach \x in {0,1,2,3,4}
    \foreach \y in {0,1,2,3,4}
    {
    \fill (\x,\y) circle (0.7pt);
    }
     \draw (2,3) circle (0.15);
         \draw (2,2) circle (0.15);
             \draw (1,1) circle (0.15);
                   \draw (3,1) circle (0.15);
\end{tikzpicture}
\caption{A Y-pin corresponding to the short-diagonal map in 3D.}\label{FigYP}
\end{figure}
The collinearity assumption on  $v(r + a)$, $v(r + b)$, $v(r + c)$, $v(r + d)$ defines a relation between the polygon $v(*,t)$ and the polygons corresponding to several previous time instances. Thus, Y-meshes can be regarded as dynamical systems. Since  the polygon $v(*,t)$ may be expressed in terms of polygons corresponding to several previous values of time, such a dynamical system is, generally speaking, defined on the space of $k$-tuples of polygons (as opposed to pentagram maps which are defined on polygons). Furthermore, those polygons need to satisfy certain additional restrictions. As an example, consider the Y-pin $S := \{(-1,0), (1,0), (0,1), (0,2)\}$ depicted in Figure~\ref{FigYP}. In this case, the horizontal level $v(*, t+2)$ may be expressed in terms of the previous two levels. Indeed, by definition of a Y-mesh, the vertex $v(i, t+2)$ may be reconstructed as the intersection of lines $\langle v(i-1, t), v(i+1, t)\rangle \cap \langle v(i-1, t+1), v(i+1, t+1)\rangle  $. Thus, in this case the Y-mesh may be viewed as a dynamical system on pairs of polygons. These polygons satisfy two additional conditions: \begin{itemize}\item The vertex $v(i,t+1)$ of the second polygon lies on the diagonal $\langle v(i-1, t), v(i+1, t)\rangle$ of the first polygon. \item The respective diagonals $\langle v(i-1, t), v(i+1, t)\rangle$ and $\langle v(i-1, t+1), v(i+1, t+1)\rangle  $ of the two polygons are coplanar. \end{itemize}
Further, the authors of \cite{glick2015} observe that in dimension $d =3 $ the square of the map $$(v(*, t), v(*, t+1)) \mapsto (v(*, t+1), v(*, t+2))$$ defined by the Y-pin depicted in Figure \ref{FigYP} is precisely the short-diagonal pentagram map. Indeed, we have $v(i-1, t+1) \in \langle v(i-2, t), v(i, t)\rangle$ and $v(i+1, t+1) \in \langle v(i, t), v(i + 2, t)\rangle$, so the point $v(i, t+2) \in \langle v(i-1, t+1), v(i+1, t+1)\rangle$ belongs to the plane $\langle v(i-2, t), v(i, t), v(i+2, t)\rangle$. Given also that $v(i, t+2) \in \langle v(i-1, t), v(i+1, t)\rangle$, we get
$$
v(i, t+2) \in \langle v(i-1, t), v(i+1, t)\rangle \cap \langle v(i-2, t), v(i, t), v(i+2, t)\rangle,
$$
which is precisely the definition of the short-diagonal map. Thus, the map defined by  the Y-pin depicted in Figure \ref{FigYP} can be viewed as the ``square root'' of the short-diagonal map. This square root, however, is not defined on the space of polygons itself, but on a certain extension of that space which consists of pairs of polygons satisfying two above-mentioned conditions. It can be shown, using purely geometric arguments, that this extended space is generically a finite cover of the space of polygons. In other words, given a level $v(*, t)$ of a Y-mesh of type depicted in Figure \ref{FigYP}, there are generically finitely many ways to reconstruct the next level $v(*, t+1)$ and thus all subsequent levels. Below we give an algebraic proof, by showing that this reconstruction problem is equivalent to a factorization problem for the difference operator $\D_+$ corresponding to the initial polygon $v(*, t)$. 

\par
Recall that the short-diagonal map in 3D corresponds to progressions $J_+ = \{-2,0,2\}$, $J_- = \{-1,1\}$. To every twisted $n$-gon in $\Proj^3$ we can assign two operators $\D_\pm \in \DO{n}{J_\pm}$ supported in those sets, which identifies the short-diagonal map with refactorization dynamics \eqref{mainRelation2}. Assume now that the polygon encoded by the operators $\D_\pm$ is realized as a level $v(*, t)$ of a Y-mesh of type depicted in Figure \ref{FigYP}. Let $V(i, t)$ be the lifts of points $v(i, t)$ to $\R^4$. Since the levels $v(*, t)$ and $v(*, t+2)$ are related by the short-diagonal map, their lifts $V(*, t)$, $V(*, t+2)$ may be chosen in such a way that $$\D_+ V(*, t) =-\D_- V(*, t) = V(*, t+2)$$ (cf. the proof of Theorem \ref{thm0}).  Furthermore, since $v(i, t+2) \in \langle v(i-1, t+1), v(i+1, t+2)\rangle$, there exists a difference operator $\D_+^{(1)}$ supported in $\{-1,1\}$ such that \begin{equation}V(*, t+2) = \D_+^{(1)} V(*, t+1).\end{equation} Analogously, there exists a difference operator $\D_+^{(2)}$ supported in $\{-1,1\}$ such that \begin{equation}\label{nextlevel}V(*, t+1) = \D_+^{(2)} V(*, t).\end{equation} Therefore, we have
\begin{equation}\label{factor}
(\D_+ - \D_+^{(1)}\D_+^{(2)}) V(*, t) = 0.
\end{equation}
But since both operators $\D_+$ and $\D_+^{(1)}\D_+^{(2)}$ and hence their difference are supported in $\{-2,0,2\}$, it follows that
$$
\D_+ = \D_+^{(1)}\D_+^{(2)}.
$$

Conversely, given such a factorization of $\D_+$, we can reconstruct the level $v(*, t+1)$ of the Y-mesh by using  \eqref{nextlevel}, and hence reconstruct all the subsequent levels. 

\begin{proposition}\label{factorProp}
A generic difference operator $\D$ supported in ${\{-2,0,2\}}$ has two factorizations of the form $\D = \D_1 \D_2$, where $\D_{i}$'s are supported in $\{-1,1\}$, if $n$ is odd, and four such factorizations if $n$ is even. Two factorizations $\D_1 \D_2$ and $\tilde \D_1 \tilde \D_2$ are considered the same if $\tilde \D_1 =  \D_1 \alpha^{-1}$ and $\tilde \D_2 =\alpha \D_2  $ for a certain $n$-periodic non-vanishing sequence $\alpha$.
\end{proposition}
\begin{remark}
The coefficients of the factors are, in general, complex numbers, even if the initial operator $\D$ is real.
\end{remark}
\begin{proof}[Proof of Proposition \ref{factorProp}]
The problem is equivalent to representing an operator   supported in ${\{0,2,4\}}$ as a product of two operators supported in $\{0,2\}$. If $n$ is odd, this problem further reduces, using the isomorphism described in Remark \ref{rem:cbi}, to representing an operator  $\D$ supported in ${\{0,1,2\}}$ as a product $\D_1\D_2$ of two operators supported in $\{0,1\}$. The latter problem has two different solutions for generic $\D$ since $\D_2$ is a right divisor of $\D$ if and only it annihilates a certain element of $\Ker \D$, and since $\D_2$ must be periodic, this element has to be of the two eigenvectors of the monodromy operator. Similarly, if $n$ is even, an operator supported in  ${\{0,2,4\}}$ can be identified with two $(n/2)$-periodic operators supported in ${\{0,1,2\}}$ (see Remark \ref{rem:cbi}), each of which has two different factorizations. Hence, in this case we generically have $2 \times 2 = 4$ distinct factorizations.
\end{proof}
Therefore, the square root of the short-diagonal map defined by the Y-pin depicted in Figure \ref{FigYP} acts on the space which is generically a $2$-to-$1$ or $4$-to-$1$ covering of the space of polygons. This space can be described as the space of triples of operators $\D_+^{(1)}, \D_+^{(2)}, \D_-$, all of which are supported in $\{-1,1\}$. These operators should be considered up to the action 
$$
\D_+^{(1)} \mapsto \alpha \D_+^{(1)} \beta^{-1}, \quad \D_+^{(2)} \mapsto \beta \D_+^{(2)} \gamma^{-1}, \quad \D_- \mapsto \alpha \D_- \gamma^{-1},
$$
where $\alpha, \beta, \gamma$ are $n$-quasi-periodic sequences with the same monodromy. This space projects to the space of polygons in $\Proj^3$ by means of the map
$$
\D_+^{(1)},  \D_+^{(2)},  \D_- \quad \longmapsto \quad  \D_+^{(1)}  \D_+^{(2)},  \D_-.
$$
Furthermore, the Y-mesh dynamics (i.e. the square root of the short-diagonal map) can be expressed in terms of difference operators as follows:
$$
\tilde \D_- \D_+^{(2)} = \tilde \D_+^{(1)}\D_-, \quad \tilde  \D_+^{(2)} =  \D_+^{(1)},
$$
which can also be described as the following refactorization:
$$
\D_-^{-1}  \D_+^{(1)}  \D_+^{(2)} \quad  \longmapsto \quad  \D_+^{(2)} \D_-^{-1}  \D_+^{(1)}.
$$
Since  $\tilde \D_+^{(2)} =  \D_+^{(1)}$, applying this refactorization twice we obtain the operator $\D_+^{(1)}\D_+^{(2)} \D_-^{-1} $, which is equivalent to the short-diagonal map. Thus, the Y-mesh interpretation of higher pentagram maps can be regarded as a step-by-step refactorization, where on each step one needs to solve a refactorization-type problem for \textit{binomial} difference operators (i.e. operators whose support consists of two elements). As shown in  \cite{glick2015}, each of these individual steps can be identified with a sequence of mutations in an appropriately defined cluster algebra. We conjecture that refactorization problems for binomial operators always admit a cluster description. An example of that is discussed in the next section. Namely, we show how the refactorization description of the classical pentagram map yields a description in terms of networks, in the spirit of \cite{Gekhtman2016}. Since network moves are well-known to correspond to cluster mutations, this also provides a cluster algebra description. 
\medskip
\subsection{From refactorization to networks}\label{sec:networks}

\begin{figure}[t]
\centering
\begin{tikzpicture}[rotate = 90, scale = 0.9]
\draw (0,-0.2) --  (0,5.2);
\draw (3,-0.2) -- (3,5.2);
\fill [opacity = 0.05] (0,-0.2) -- (3,-0.2) -- (3,5.2) -- (0,5.2) -- cycle;
\draw [->-] (0,2) -- (1,2);
\node at (0.5,2) [left] () {$d$};
\draw [->-] (1,2) -- (2,3);
\node at (1.5,2.5) [below] () {$c$};
\draw [->-] (2,1) -- (3,1);
\draw [->-] (1,2) -- (2,1);
\node at (1.5,1.5) [left] () {$e$};
\draw [->-] (2,3) -- (3,3);
\draw [->-] (1,4) -- (2,3);
\node at (1.5,3.5) [right] () {$b$};
\draw [->-] (0,4) -- (1,4);
\draw [->-] (1,4) -- (3,5);
\node at (2,4.5) [right] () {$a$};
\draw [->-] (0,0) -- (2,1);
\node at (1,0.5) [left] () {$f$};
\fill(0,0) circle (.4ex);
\fill(0,2) circle (.4ex);
\fill(0,4) circle (.4ex);
\fill(1,4) circle (.4ex);
\fill(2,1) circle (.4ex);
\fill(2,3) circle (.4ex);
\fill(3,5) circle (.4ex);
\fill(3,1) circle (.4ex);
\fill(3,3) circle (.4ex);
\fill(1,2) circle (.4ex);
\node at (0,0) [below] () {$1$};
\node at (0,2) [below] () {$0$};
\node at (0,4) [below] () {$-1$};
\node at (3,1) [above] () {$1$};
\node at (3,3) [above] () {$0$};
\node at (3,5) [above] () {$-1$};
%
%
%
%
%

\end{tikzpicture}
\caption{A network.}\label{FigNet}
\end{figure}

In this section we show how the refactorization approach to the classical pentagram map yields a description in terms of weighted directed networks, in the spirit of \cite{Gekhtman2016}. Such networks were introduced by A.\,Postnikov~\cite{postnikov2006total} to study totally positive Grassmannians. For the purposes of our paper, a \textit{network} is a directed graph embedded in an infinite strip, as shown in Figure \ref{FigNet}. All vertices located at one boundary component of the strip are $1$-valent sources labeled by integers (so there are countably many of them). Likewise, all vertices at the other boundary component are $1$-valent sinks also labeled by integers. All interior vertices are $3$-valent and are neither sources nor sinks. Some edges of the graph are assigned with numbers, called \textit{weights}. If no weight is explicitly assigned, it is assumed that the weight of the corresponding edge is $1$. We also assume for simplicity that there are no directed cycles.  \par
The \textit{weight of a directed path} in a network is the product of weights of edges on that path. The \textit{boundary measurement} between the source $i$ and sink $j$ is the sum of weights of all directed paths going from $i$ to $j$ (we will only consider networks for which every such sum is finite). The \textit{boundary measurement matrix} is the bi-infinite matrix whose entries are the boundary measurements (below we use the convention that the $(i,j)$ entry of that matrix corresponds to boundary measurement between the source $j$ and sink $i$). In what follows, we only consider networks whose boundary measurement matrices represent difference or pseudo-difference operators. If the boundary measurement matrix of a certain network represents an operator, we will also say that the network itself represents that operator.

\begin{figure}[b]
\centering
\begin{tikzpicture}[scale = 0.9]
\draw (-1,-0.2) --  (-1,5.2);
\draw (3,-0.2) -- (3,5.2);
\fill [opacity = 0.05] (-1,-0.2) -- (3,-0.2) -- (3,5.2) -- (-1,5.2) -- cycle;
\draw [->-] (0,2) -- (2,3);
\node at (1,2.5) [above] () {$a_1$};
\draw [->-] (0,4) -- (2,3);
\node at (1,3.5) [above] () {$b_1$};
\draw [->-] (0,4) -- (2,5);
\node at (1,4.5) [above] () {$a_2$};
\draw [->-] (0,0) -- (2,1);
\node at (1,0.5) [above] () {$a_{0}\,$};
\draw [->-] (0,2) -- (2,1);
\node at (1,1.5) [above] () {$b_{0}$};
\fill(0,0) circle (.4ex);
\fill(0,2) circle (.4ex);
\fill(0,4) circle (.4ex);
\fill(2,1) circle (.4ex);
\fill(2,3) circle (.4ex);
\fill(2,5) circle (.4ex);
\fill(-1,0) circle (.4ex);
\fill(-1,2) circle (.4ex);
\fill(-1,4) circle (.4ex);
\fill(3,1) circle (.4ex);
\fill(3,3) circle (.4ex);
\fill(3,5) circle (.4ex);
\draw [->-] (-1,0) -- (0,0);
\draw [->-] (-1,2) -- (0,2);
\draw [->-] (-1,4) -- (0,4);
\draw [->-] (2,1) -- (3,1);
\draw [->-] (2,3) -- (3,3);
\draw [->-] (2,5) -- (3,5);
\node at (-1,0) [left] () {$0$};
\node at (-1,2) [left] () {$1$};
\node at (-1,4) [left] () {$2$};
\node at (3,1) [right] () {$0$};
\node at (3,3) [right] () {$1$};
\node at (3,5) [right] () {$2$};
\draw (0,0) -- (0.4,-0.2);
\draw (2,5) -- (1.6,5.2);
%
%
%
%
%

\end{tikzpicture}
\qquad\qquad\qquad
\begin{tikzpicture}[scale = 0.9]
\draw (-1,-0.2) --  (-1,5.2);
\draw (3,-0.2) -- (3,5.2);
\fill [opacity = 0.05] (-1,-0.2) -- (3,-0.2) -- (3,5.2) -- (-1,5.2) -- cycle;
\draw [-<-] (0,1) -- (2,2);
\node at (1,1.5) [above] () {$-b_{0}\,\,$};
\draw [->-] (0,3) -- (2,2);
\node at (1,2.5) [above] () {$a_1^{-1}\,$};
\draw [-<-] (0,3) -- (2,4);
\node at (1,3.5) [above] () {$-b_1\,$};
\draw [->-] (0,1) -- (2,0);
\node at (1,0.5) [above] () {$a_{0}^{-1}\,$};
\draw [->-] (0,5) -- (2,4);
\node at (1,4.5) [above] () {$a_2^{-1}\,$};
\fill(0,1) circle (.4ex);
\fill(0,3) circle (.4ex);
\fill(0,5) circle (.4ex);
\fill(2,0) circle (.4ex);
\fill(2,2) circle (.4ex);
\fill(2,4) circle (.4ex);
\fill(-1,1) circle (.4ex);
\fill(-1,3) circle (.4ex);
\fill(-1,5) circle (.4ex);
\fill(3,0) circle (.4ex);
\fill(3,2) circle (.4ex);
\fill(3,4) circle (.4ex);
\draw [->-] (-1,1) -- (0,1);
\draw [->-] (-1,3) -- (0,3);
\draw [->-] (-1,5) -- (0,5);
\draw [->-] (2,0) -- (3,0);
\draw [->-] (2,2) -- (3,2);
\draw [->-] (2,4) -- (3,4);
\node at (-1,1) [left] () {$0$};
\node at (-1,3) [left] () {$1$};
\node at (-1,5) [left] () {$2$};
\node at (3,0) [right] () {$0$};
\node at (3,2) [right] () {$1$};
\node at (3,4) [right] () {$2$};
\draw (0,5) -- (0.4,5.2);
\draw (2,0) -- (1.6,-0.2);
%
%
%
%
%

\end{tikzpicture}

\caption{Networks representing the difference operator $a+bT$ and its inverse.}\label{FigDO}
\end{figure}

\begin{example} For two bi-infinite scalar sequences $a$, $b$, consider the difference operator $a+bT$. Figure~\ref{FigDO} shows networks representing that operator and its inverse (which is a pseudo-difference operator). To prove that these two networks represent inverse operators, one considers their concatenation, i.e. glues the sinks of one network to the sources of the other (which corresponds to composition of the corresponding operators), and shows that the resulting network represents the identity operator. Note that if the operator  $a+bT$ is periodic, then these networks are also periodic and can be thought of as networks on a cylinder, as in \cite{Gekhtman2016}.

\end{example}
Networks admit local transformations which do not change boundary measurements. These transformations are known as \textit{Postnikov moves}. Following \cite{Gekhtman2016}, we consider three types of moves depicted in Figure \ref{FigMove}. For the third move, the updated weights $\tilde w$, $\tilde x$, $\tilde y$, $\tilde z$ are rational functions of the initial weights $w$, $x$, $y$, $z$ whose particular form can be easily derived from preservation of boundary measurements and is irrelevant to our purposes. For other types of moves, weights do not change. \par
\begin{figure}
\centering
\begin{tabular}{>{\centering\arraybackslash} m{4cm} >{\centering\arraybackslash} m{1cm}>{\centering\arraybackslash} m{4cm} >{\centering\arraybackslash} m{3cm}}
\begin{tikzpicture}[, scale = 0.9, rotate = -90]

\draw [->-] (0,0) --  (1,0); 
\draw [->-] (-1,-1) --  (0,0); 
\draw [-<-] (-1,1)-- (0,0);
\draw [->-] (1,0)-- (2,-1);
\draw [-<-] (2,1) -- (1,0);
\node at (-0.5, -0.5) [above] () {$a$};
\node at (-0.5, 0.5) [above] () {$b$};
\node at (1.5, -0.5) [above] () {$c$};
\node at (1.5, 0.5) [above] () {$d$};
\fill(0,0) circle (.4ex);
\fill(1,0) circle (.4ex);
\end{tikzpicture} & $\longleftrightarrow$ & \begin{tikzpicture}[, scale = 0.9, rotate = -90]
\draw [-<-] (0,0) --  (0,-1); 
\draw [->-] (-1,-2) --  (0,-1); 
\draw [-<-] (-1,1)-- (0,0);
\draw [->-] (0,-1)-- (1,-2);
\draw [-<-] (1,1) -- (0,0);
\node at (-0.5, -1.5) [above] () {$a$};
\node at (-0.5, 0.5) [above] () {$b$};
\node at (0.5, -1.5) [above] () {$c$};
\node at (0.5, 0.5) [above] () {$d$};
\fill(0,0) circle (.4ex);
\fill(0,-1) circle (.4ex);
\end{tikzpicture}  & Type 1 \\ 
\begin{tikzpicture}[, scale = 0.9]

\draw [->-] (0,0) --  (1,0); 
\draw [->-] (-1,-1) --  (0,0); 
\draw [->-] (-1,1)-- (0,0);
\draw [->-] (1,0)-- (2,-1);
\draw [->-] (2,1) -- (1,0);
\node at (-0.5, -0.5) [above] () {$c\,$};
\node at (-0.5, 0.5) [above] () {$a$};
\node at (1.5, -0.5) [above] () {$d$};
\node at (1.5, 0.5) [above] () {$b$};

\fill(0,0) circle (.4ex);
\fill(1,0) circle (.4ex);
\end{tikzpicture} & $\longleftrightarrow$ & \begin{tikzpicture}[, scale = 0.9]
\draw [->-] (0,0) --  (0,-1); 
\draw [->-] (-1,-2) --  (0,-1); 
\draw [->-] (-1,1)-- (0,0);
\draw [->-] (0,-1)-- (1,-2);
\draw [->-] (1,1) -- (0,0);
\fill(0,0) circle (.4ex);
\fill(0,-1) circle (.4ex);
\node at (-0.5, -1.5) [above] () {$c\,$};
\node at (-0.5, 0.5) [above] () {$a$};
\node at (0.5, -1.5) [above] () {$d$};
\node at (0.5, 0.5) [above] () {$b$};
\end{tikzpicture}  & Type 2 \\ 
\begin{tikzpicture}[, scale = 0.9]
\draw [->-] (0,0) --  (1,1); 
\node at (0.5,0.5) [above] () {$x\,\,\,$};
\draw [->-] (1,1) --  (2,0); 
\node at (1.5,0.5) [above] () {$\,\,y$};
\draw [->-] (2,0)-- (1,-1);
\node at (1.5,-0.5) [below] () {$\,\,z$};
\draw [->-] (0,0)-- (1,-1);
\node at (0.5,-0.5) [below] () {$w\,\,\,$};
\draw [->-] (-1,0) -- (0,0);
\draw [->-] (2,0) -- (3,0);
\draw [->-] (1,2) -- (1,1);
\draw [->-] (1,-1) -- (1,-2);
\node at (-0.5, 0) [above] () {$a$};
\node at (1, 1.5) [right] () {$b$};
\node at (2.5, 0) [above] () {$c$};
\node at (1,-1.5) [right] () {$d$};
\fill(0,0) circle (.4ex);
\fill(1,1) circle (.4ex);
\fill(1,-1) circle (.4ex);
\fill(2,0) circle (.4ex);
\end{tikzpicture} & $\longleftrightarrow$ & \begin{tikzpicture}[, scale = 0.9]
\draw [-<-] (0,0) --  (1,1); 
\node at (0.5,0.5) [above] () {$\tilde x\,\,\,$};
\draw [->-] (1,1) --  (2,0); 
\node at (1.5,0.5) [above] () {$\,\,\tilde y$};
\draw [-<-] (2,0)-- (1,-1);
\node at (1.5,-0.5) [below] () {$\,\,\tilde z$};
\draw [->-] (0,0)-- (1,-1);
\node at (0.5,-0.5) [below] () {$\tilde w\,\,\,$};
\draw [->-] (-1,0) -- (0,0);
\node at (-0.5, 0) [above] () {$a$};
\node at (1, 1.5) [right] () {$b$};
\node at (2.5, 0) [above] () {$c$};
\node at (1,-1.5) [right] () {$d$};
\draw [->-] (2,0) -- (3,0);
\draw [->-] (1,2) -- (1,1);
\draw [->-] (1,-1) -- (1,-2);
\fill(0,0) circle (.4ex);
\fill(1,1) circle (.4ex);
\fill(1,-1) circle (.4ex);
\fill(2,0) circle (.4ex);
\end{tikzpicture}  & Type 3\end{tabular}
\caption{Postnikov moves.}\label{FigMove}
\end{figure}
\begin{figure}[h]
\centering

\begin{tikzcd}[ row sep = large, column sep = small]
\begin{tikzpicture}[]
\draw (0,-0.2) --  (0,5.2);
\clip (-1,-0.2) -- (6,-0.2) -- (6,5.2) -- (-1,5.2) -- cycle;
\fill [opacity = 0.05] (0,-0.2) -- (5,-0.2) -- (5,5.2) -- (0,5.2) -- cycle;
\draw [->-] (1,2) -- (2,3);
\draw [line width = 5, opacity = 0, shorten <= 20](1,2) -- (2,3);
\node at (1.5,2.5) [left] () {$c_1$};
\draw [line width = 5, opacity = 0, shorten <= 20](1,4) -- (2,3);
\draw [->-] (1,4) -- (2,3);
\node at (1.5,3.5) [left] () {$d_1$};
\draw [line width = 5, opacity = 0, shorten <= 20](1,4) -- (2,5);
\draw [->-] (1,4) -- (2,5);
\node at (1.5,4.5) [left] () {$c_2$};
\draw [line width = 5, opacity = 0, shorten <= 20](1,0) -- (2,1);
\draw [->-] (1,0) -- (2,1);
\node at (1.5,0.5) [left] () {$c_{0}$};
\draw [line width = 5, opacity = 0, shorten <= 20] (1,2) -- (2,1);;
\draw [->-] (1,2) -- (2,1);
\node at (1.5,1.5) [left] () {$d_{0}$};
\fill(0,0) circle (.4ex);
\fill(0,2) circle (.4ex);
\fill(0,4) circle (.4ex);
\draw [->-] (0,0) -- (1,0);
\draw [->-] (0,2) -- (1,2);
\draw [->-] (0,4) -- (1,4);
\fill(1,0) circle (.4ex);
\fill(1,2) circle (.4ex);
\fill(1,4) circle (.4ex);
\fill(2,1) circle (.4ex);
\fill(2,3) circle (.4ex);
\fill(2,5) circle (.4ex);
\draw [line width = 5, opacity = 0] (2,1) -- (3,1);
\draw [line width = 5, opacity = 0](2,3) -- (3,3);
\draw [line width = 5, opacity = 0] (2,5) -- (3,5);
\draw [->, line width = 1.5] (2,1) -- (3,1);
\draw [->, line width = 1.5] (2,3) -- (3,3);
\draw [->, line width = 1.5] (2,5) -- (3,5);
\node at (0,0) [left] () {$2$};
\node at (0,2) [left] () {$3$};
\node at (0,4) [left] () {$4$};
\draw (1,0) -- (1.2,-0.2);
\draw (2,5) -- (1.8,5.2);
\draw [line width = 5, opacity = 0] (2,5) -- (1.8,5.2);
\draw (5,-0.2) -- (5,5.2);
\draw [-<-] (3,1) -- (4,2);
\node at (3.5,1.5) [right] () {$-b_{0}\,\,$};
\draw [->-] (3,3) -- (4,2);
\node at (3.5,2.6) [right] () {$a_1^{-1}\,$};
\draw [-<-] (3,3) -- (4,4);
\node at (3.5,3.5) [right] () {$-b_1\,$};
\draw [->-] (3,1) -- (4,0);
\node at (3.5,0.6) [right] () {$a_{0}^{-1}\,$};
\draw [->-] (3,5) -- (4,4);
\node at (3.5,4.6) [right] () {$a_2^{-1}\,$};
\draw [line width = 5, opacity = 0, shorten >= 20](3,1) -- (4,2);
\draw [line width = 5, opacity = 0, shorten >= 20] (3,3) -- (4,2);
\draw [line width = 5, opacity = 0, shorten >= 20](3,3) -- (4,4);
\draw [line width = 5, opacity = 0, shorten >= 20](3,1) -- (4,0);
\draw [line width = 5, opacity = 0, shorten >= 20](3,5) -- (4,4);
\fill(3,1) circle (.4ex);
\fill(3,3) circle (.4ex);
\fill(4,0) circle (.4ex);
\fill(4,2) circle (.4ex);
\fill(4,4) circle (.4ex);
\fill(3,5) circle (.4ex);
\fill(5,0) circle (.4ex);
\fill(5,2) circle (.4ex);
\fill(5,4) circle (.4ex);
\draw [->-] (4,0) -- (5,0);
\draw [->-] (4,2) -- (5,2);
\draw [->-] (4,4) -- (5,4);
\node at (5,0) [right] () {$0$};
\node at (5,2) [right] () {$1$};
\node at (5,4) [right] () {$2$};
\draw (3,5) -- (3.2,5.2);
\draw (4,0) -- (3.8,-0.2);
\draw [line width = 5, opacity = 0](3,5) -- (3.2,5.2);
\end{tikzpicture} 
\arrow[]{rrrr}{\parbox{1.5cm}{\centering Type\,2 moves at all thick edges}} & & & & \begin{tikzpicture}[]
\clip (-1,-0.2) -- (6,-0.2) -- (6,5.2) -- (-1,5.2) -- cycle;
\draw (0,-0.2) --  (0,5.2);
\fill [opacity = 0.05] (0,-0.2) -- (5,-0.2) -- (5,5.2) -- (0,5.2) -- cycle;
\draw [->-] (1,2) -- (2.5,2.75);
\draw [->-] (1,4) -- (2.5,3.25);
\draw [->-] (1,4) -- (2.5,4.75);

\fill [opacity = 0.1] (2.5,4.75) -- (4,4) -- (2.5,3.25) -- (1,4) -- cycle;
\fill [opacity = 0.1] (2.5,2.75) -- (4,2) -- (2.5,1.25) -- (1,2) -- cycle;
\fill [opacity = 0.1] (2.5,0.75) -- (4,0) -- (2.5,-0.75) -- (1,0) -- cycle;
\draw [->-] (1,0) -- (2.5,0.75);
\draw [->-] (1,2) -- (2.5,1.25);
\fill(0,0) circle (.4ex);
\fill(0,2) circle (.4ex);
\fill(0,4) circle (.4ex);
\draw [->-] (0,0) -- (1,0);
\draw [->-] (0,2) -- (1,2);
\draw [->-] (0,4) -- (1,4);
\fill(1,0) circle (.4ex);
\fill(1,2) circle (.4ex);
\fill(1,4) circle (.4ex);
\fill(2.5,0.75) circle (.4ex);
\fill(2.5,3.25) circle (.4ex);
\draw [->-] (2.5,1.25) -- (2.5,0.75);
\draw [->-] (2.5,3.25) -- (2.5,2.75);
\draw [->-] (2.5,5.2) -- (2.5,4.75);
\node at (0,0) [left] () {$2$};
\node at (0,2) [left] () {$3$};
\node at (0,4) [left] () {$4$};
\draw (1,0) -- (1.4,-0.2);
\draw (5,-0.2) -- (5,5.2);
\draw [-<-] (2.5,1.25) -- (4,2);
\draw [->-] (2.5,2.75) -- (4,2);
\draw [-<-] (2.5,3.25) -- (4,4);
\draw [->-] (2.5,0.75) -- (4,0);
\draw [->-] (2.5,4.75) -- (4,4);
\fill(2.5,1.25) circle (.4ex);
\fill(2.5,2.75) circle (.4ex);
\fill(4,0) circle (.4ex);
\fill(4,2) circle (.4ex);
\fill(4,4) circle (.4ex);
\fill(2.5,4.75) circle (.4ex);
\fill(5,0) circle (.4ex);
\fill(5,2) circle (.4ex);
\fill(5,4) circle (.4ex);
\draw [->-] (4,0) -- (5,0);
\draw [->-] (4,2) -- (5,2);
\draw [->-] (4,4) -- (5,4);
\node at (5,0) [right] () {$0$};
\node at (5,2) [right] () {$1$};
\node at (5,4) [right] () {$2$};
\draw (4,0) -- (3.6,-0.2);

\end{tikzpicture} \arrow[]{dd}{\parbox{3cm}{\centering Type\,3 moves at all shaded faces}}\\
\\
\begin{tikzpicture}[]
\draw (0,-0.2) --  (0,5.2);
\clip (-1,-0.2) -- (6,-0.2) -- (6,5.2) -- (-1,5.2) -- cycle;
\fill [opacity = 0.05] (0,-0.2) -- (5,-0.2) -- (5,5.2) -- (0,5.2) -- cycle;
\draw [-<-] (1,2) -- (2,3);
\draw [line width = 5, opacity = 0, shorten <= 20](1,2) -- (2,3);

\draw [line width = 5, opacity = 0, shorten <= 20](1,4) -- (2,3);
\draw [->-] (1,4) -- (2,3);
\draw [line width = 5, opacity = 0, shorten <= 20](1,4) -- (2,5);
\draw [-<-] (1,4) -- (2,5);
\draw [line width = 5, opacity = 0, shorten <= 20](1,0) -- (2,1);
\draw [-<-] (1,0) -- (2,1);
\draw [line width = 5, opacity = 0, shorten <= 20] (1,2) -- (2,1);;
\draw [->-] (1,2) -- (2,1);
\fill(0,0) circle (.4ex);
\fill(0,2) circle (.4ex);
\fill(0,4) circle (.4ex);
\draw [->-] (0,0) -- (1,0);
\draw [->-] (0,2) -- (1,2);
\draw [->-] (0,4) -- (1,4);
\fill(1,0) circle (.4ex);
\fill(1,2) circle (.4ex);
\fill(1,4) circle (.4ex);
\fill(2,1) circle (.4ex);
\fill(2,3) circle (.4ex);
\fill(2,5) circle (.4ex);
\draw [->-] (2,1) -- (3,1);
\draw [->-] (2,3) -- (3,3);
\draw [->-] (2,5) -- (3,5);
\node at (0,0) [left] () {$2$};
\node at (0,2) [left] () {$3$};
\node at (0,4) [left] () {$4$};
\draw [dashed] (2.5,-0.2) -- (2.5,5.2);
\draw (1,0) -- (1.2,-0.2);
\draw (2,5) -- (1.8,5.2);
\draw [line width = 5, opacity = 0] (2,5) -- (1.8,5.2);
\draw (5,-0.2) -- (5,5.2);
\draw [->-] (3,1) -- (4,2);
\draw [->-] (3,3) -- (4,2);
\draw [->-] (3,3) -- (4,4);
\draw [->-] (3,1) -- (4,0);
\draw [->-] (3,5) -- (4,4);

\fill(3,1) circle (.4ex);
\fill(3,3) circle (.4ex);
\fill(4,0) circle (.4ex);
\fill(4,2) circle (.4ex);
\fill(4,4) circle (.4ex);
\fill(3,5) circle (.4ex);
\fill(5,0) circle (.4ex);
\fill(5,2) circle (.4ex);
\fill(5,4) circle (.4ex);
\draw [->-] (4,0) -- (5,0);
\draw [->-] (4,2) -- (5,2);
\draw [->-] (4,4) -- (5,4);
\node at (5,0) [right] () {$0$};
\node at (5,2) [right] () {$1$};
\node at (5,4) [right] () {$2$};
\node at (2.3,1) [below] () {$3$};
\node at (2.3,3) [below] () {$4$};
\node at (2.3,5) [below] () {$5$};

\draw (3,5) -- (3.2,5.2);
\draw (4,0) -- (3.8,-0.2);

\draw [line width = 5, opacity = 0](3,5) -- (3.2,5.2);


\end{tikzpicture}  & & & & \begin{tikzpicture}[]
\clip (-1,-0.2) -- (6,-0.2) -- (6,5.2) -- (-1,5.2) -- cycle;
\draw (0,-0.2) --  (0,5.2);
\fill [opacity = 0.05] (0,-0.2) -- (5,-0.2) -- (5,5.2) -- (0,5.2) -- cycle;
\draw [-<-] (1,2) -- (2.5,2.75);
\draw [->-] (1,4) -- (2.5,3.25);
\draw [-<-] (1,4) -- (2.5,4.75);

\draw [-<-] (1,0) -- (2.5,0.75);
\draw [->-] (1,2) -- (2.5,1.25);
\fill(0,0) circle (.4ex);
\fill(0,2) circle (.4ex);
\fill(0,4) circle (.4ex);
\draw [->-] (0,0) -- (1,0);
\draw [->-] (0,2) -- (1,2);
\draw [->-] (0,4) -- (1,4);
\fill(1,0) circle (.4ex);
\fill(1,2) circle (.4ex);
\fill(1,4) circle (.4ex);
\fill(2.5,0.75) circle (.4ex);
\fill(2.5,3.25) circle (.4ex);
\draw [->, line width = 1.5] (2.5,1.25) -- (2.5,0.75);
\draw [->,  line width = 1.5] (2.5,3.25) -- (2.5,2.75);
\draw [->,  line width = 1.5] (2.5,5.2) -- (2.5,4.75);
\node at (0,0) [left] () {$2$};
\node at (0,2) [left] () {$3$};
\node at (0,4) [left] () {$4$};
\draw (1,0) -- (1.4,-0.2);
\draw (5,-0.2) -- (5,5.2);
\draw [->-] (2.5,1.25) -- (4,2);
\draw [->-] (2.5,2.75) -- (4,2);
\draw [->-] (2.5,3.25) -- (4,4);
\draw [->-] (2.5,0.75) -- (4,0);
\draw [->-] (2.5,4.75) -- (4,4);
\fill(2.5,1.25) circle (.4ex);
\fill(2.5,2.75) circle (.4ex);
\fill(4,0) circle (.4ex);
\fill(4,2) circle (.4ex);
\fill(4,4) circle (.4ex);
\fill(2.5,4.75) circle (.4ex);
\fill(5,0) circle (.4ex);
\fill(5,2) circle (.4ex);
\fill(5,4) circle (.4ex);
\draw [->-] (4,0) -- (5,0);
\draw [->-] (4,2) -- (5,2);
\draw [->-] (4,4) -- (5,4);
\node at (5,0) [right] () {$0$};
\node at (5,2) [right] () {$1$};
\node at (5,4) [right] () {$2$};
\draw (4,0) -- (3.6,-0.2);

\end{tikzpicture}
\arrow[swap]{llll}{\parbox{1.5cm}{\centering Type\,1 moves at all thick edges}} 
\end{tikzcd}.

\caption{A network representing a pseudo-difference operator $  (a + bT)^{-1}(cT^2 + dT^3)$ and its refactorization.}\label{FigPSIDO}
\end{figure}
We now show how to use Postnikov moves to encode refactorization of pseudo-difference operators. We will do that using the classical pentagram map as an example. Consider the progressions $J_+ = \{0,1\}$, $J_- = \{2,3\}$. Then the equation 
$
\tilde{\D}_+  {\D}_- = \tilde {\D}_-  {\D}_+,
$ 
where the operators $\D_\pm$ and $\tilde \D_\pm$ are supported in $J_\pm$, 
encodes the inverse pentagram map. Accordingly, the pentagram map itself can be described by
$
{\D}_+  \tilde{\D}_- =  {\D}_-  \tilde {\D}_+,
$
which is the same as
\begin{equation}\label{eq:pentnet}
\tilde{\D}_-  \tilde {\D}_+^{-1} = \D_+^{-1}\D_-.
\end{equation}
Thus, an application of the pentagram map can be thought of as rewriting an operator of the form $ \D_+^{-1}\D_-$ as~$\tilde{\D}_-  \tilde {\D}_+^{-1}$. This operation can be represented as a sequence of Postnikov moves, as follows. The network representing  $ \D_+^{-1}\D_-$, where $\D_+ = a + bT$ and $\D_- = cT^2 + dT^3$ is basically the concatenation of networks in Figure \ref{FigDO}, up to a change of weights and shift of indices, see upper left picture in Figure~\ref{FigPSIDO}. Applying Postnikov moves as shown in the figure (the figure does not show transformations of weights since those are irrelevant) results in the network depicted in the bottom left picture. That resulting network represents an operator of the form~$\tilde{\D}_-  \tilde {\D}_+^{-1}$, as can be seen by cutting it along the dashed line and labeling the newly obtained boundary vertices as shown (simply put, the left half of the new network looks the same as the right half of the initial one, and vice versa). Furthermore, since this new network is obtained from the initial one by Postnikov moves, these networks represent the same operator:
$$
\tilde{\D}_-  \tilde {\D}_+^{-1} = \D_+^{-1}\D_-,
$$
as required. Thus, the pentagram map can be represented as a sequence of Postnikov moves. Furthermore, it is well known that Postnikov moves give rise to cluster transformations of certain variables associated with faces, which gives the cluster description of the pentagram map, see \cite{Gekhtman2016}.
\begin{remark}
In \cite{Gekhtman2016}, the authors consider two different networks describing the pentagram map, in a sense dual to each other. One of their networks coincides with the one shown in the upper right picture in Figure \ref{FigPSIDO}, cf. \cite[Figure 14]{Gekhtman2016}. Thus, their network is obtained from ours by type 2 Postnikov moves. The advantage of our approach is that we obtain networks directly from the refactorization description and hence essentially from the geometry of the map, while in  \cite{Gekhtman2016} the identification between maps and networks is done at the level of formulas.
\end{remark}

More generally, one gets a network description for all refactorization corresponding to $J_\pm$ of the form $\{k, k+1\}$, thus recovering the results of \cite{Gekhtman2016}. It is an open problem whether it is possible to represent other pentagram maps using networks. This problem reduces to the question of constructing networks representing operators with support other than  $\{k, k+1\}$. This can definitely be done by means of factorization, as in the previous section. However, the weights of so obtained networks will not be rational functions in terms of the initial data. It is an interesting question whether one can represent an operator supported, say, in $\{0,1,2\}$ by means of a network whose weights are rational in terms of the operator coefficients. If this can be done, one may hope to obtain a cluster description of higher pentagram maps.

\medskip
\section{Open problems}\label{sec:problems}

\textbf{1. Relation to cluster algebras.} The classical pentagram map, as well as pentagram maps on corrugated polygons, can be described as sequences of cluster mutations~\cite{GLICK20111019, Gekhtman2016}. It would be interesting to find a similar description for more general pentagram maps on $J$-corrugated polygons or, even more generally, the maps $\Psi_{J_\pm}$ associated with arbitrary pairs of progressions with the same common difference. \par
Short-diagonal and dented maps were recently treated from the cluster perspective in \cite{glick2015} (see also  Section \ref{sec:meshes} above), where the authors introduced certain variables which transform, under the corresponding pentagram map, according to a cluster rule. However, the definition of those variables involves introduction of the $k$'th root of the corresponding map, which in general results in multivalued functions on the space of polygons (as we show in Section \ref{sec:meshes}, computation of such a root is equivalent to a factorization problem for a certain difference operator; in general, this operation cannot be performed using only rational functions). Do there exist single-valued cluster variables for short-diagonal, dented, and more general maps studied in the present paper? A possible approach to this problem is outlined in Section \ref{sec:networks}: first construct networks representing arbitrary difference operators and their inverses, and then show that refactorization is equivalent to a sequence of Postnikov moves.
\par
A related question is whether our maps fit into a construction of \cite{Goncharov2013} of integrable systems associated with dimer models on bipartite graphs, or perhaps some generalized version of it.
\\
\\
\textbf{2. Refactorization and Y-meshes.} Generalize the approach of Section~\ref{sec:meshes}  to all types of Y-meshes. What is the precise relation between maps described in the present paper and maps that admit a Y-mesh description? In particular, is it possible to interpret the cluster dynamics of \cite{glick2015} as refactorization of ratios of binomial difference operators, as in Section~\ref{sec:meshes} above?
\\
\\
\textbf{3. Maps associated with pairs of non-disjoint progressions.} In this paper we constructed refactorization maps associated with pairs of progressions $J_\pm \subset \Z$ with the same common difference. When these progressions are disjoint, such maps can be interpreted as pentagram-type maps. What is a geometric interpretation in the non-disjoint case?
\\
\\
\textbf{4. The leapfrog map.} Give a geometric proof of the fact that for $J_+ = \{-1,0\}$, $J_- = \{0,1\}$ our construction leads to the leapfrog map of \cite{Gekhtman2016} (cf. Remark \ref{rm:leapfrog}).
\\
\\
\textbf{5. Integrability.} For all maps $\Psi_{J_\pm}$ associated with pairs of progressions we constructed a Lax representation with spectral parameter and a Poisson structure such that the first integrals coming from the Lax representation Poisson-commute. This suggests that all these maps are both algebraically and Liouville integrable. Find a proof of this fact, i.e. show that the joint levels sets of first integrals are Lagrangian submanifolds of symplectic leaves, that each of those submanifolds can be identified with an open subset in the Jacobian of the corresponding spectral curve, and that a suitable power of the map  $\Psi_{J_\pm}$  is a translation relative to the natural group structure on the Jacobian.
\\
\\
\textbf{6. Difference operators with matrix coefficients and pentagram maps on Grassmannians.} The construction of the present paper can be generalized to difference operators with matrix coefficients. Does this lead to pentagram maps on Grassmannians defined in \cite{felipe2015}? How are the corresponding Poisson structures related to double brackets of \cite{ovenhouse2018non}?
\\
\\
\textbf{7. Partial difference operators and the Laplace transform.} One can generalize the construction of the present paper to partial difference operators supported in arithmetic progressions $J_\pm \subset \Z^2$.  This leads to pentagram-type maps defined on polyhedra. The simplest example of such a map is the discrete Laplace transform of \cite{doliwa1997geometric} corresponding to $J_+= \{(0,0), (1,0) \}$, $J_- = \{(0,1), (1,1) \}$. Are maps of this type integrable? \par
Note that pentagram map as well as its generalizations to corrugated polygons can be thought of as reductions of the Laplace transform, see \cite{affolter2019vector}. This should correspond to certain reductions of partial difference operators to ordinary ones.
\\
\\
\textbf{8. Poisson structures on reductions of difference operators.} 
Poisson structures studied in the present paper arise as reductions of structures on \textit{rational pseudo-difference} operators. One can also study Poisson structures on polygons arising as reductions of \textit{difference} operators,  see Remark~\ref{ratvsint}. For example, taking $d = 1$ and coordinatizing the moduli space of polygons in $\RP^1$ by means of cross-ratios of quadruples of consecutive vertices, one gets the following Poisson bracket:
\begin{equation}
\{x_i, x_{i+1}\} = x_i x_{i+1}(x_i + x_{i+1} - 1), \quad \{x_i, x_{i+2}\} = x_i x_{i+1}x_{i+2}.
\end{equation}
This bracket is well-known in relation to the Volterra lattice and also arises in the study of {cross-ratio dynamics} on polygons \cite{arnold2018cross, suris1999integrable}. Furthermore, this structure is often considered as a lattice analogue of the Virasoro algebra \cite{faddeev2016liouville}.
Similarly, computing the bracket on polygons in $\RP^2$, one recovers the Belov-Chaltikian lattice $W_3$-algebra \cite{belov1993lattice}. More generally, we believe that Poisson structures on polygons obtained by reduction from difference operators can be viewed as lattice versions of classical $W$-algebras. In particular, we conjecture that these structures coincide with the ones constructed by means of difference Drinfeld-Sokolov reduction \cite{beffa2013hamiltonian}. One interesting property that such structures have is that, in contrast to Poisson brackets studied in the present paper, they restrict to the space of closed polygons.

\newcounter{ai}
\newcommand{\ai}[1]
{\stepcounter{ai}$^{\bf AI\theai}$%
\footnotetext{\hspace{-3.7mm}$^{\blacksquare\!\blacksquare}$
{\bf AI\theai:~}#1}}

\newcounter{bk}
\newcommand{\bk}[1]
{\stepcounter{bk}$^{\bf BK\thebk}$%
\footnotetext{\hspace{-3.7mm}$^{\blacksquare\!\blacksquare}$
{\bf BK\thebk:~}#1}}

\par\medskip

\bibliographystyle{plain}
\bibliography{pent.bib}

\begin{thebibliography}{10}

\bibitem{affolter2019vector}
N.~Affolter, M.~Glick, P.~Pylyavskyy, and S.~Ramassamy.
\newblock Vector-relation configurations and plabic graphs.
\newblock {\em arXiv:1908.06959}, 2019.

\bibitem{arnold2018cross}
M.~Arnold, D.~Fuchs, I.~Izmestiev, and S.~Tabachnikov.
\newblock Cross-ratio dynamics on ideal polygons.
\newblock {\em Int. Math. Res. Not.}, https://doi.org/10.1093/imrn/rnaa289,
  2020.

\bibitem{belov1993lattice}
A.A. Belov and K.D. Chaltikian.
\newblock Lattice analogues of {W}-algebras and classical integrable equations.
\newblock {\em Phys. Lett. B}, 309(3-4):268--274, 1993.

\bibitem{deift1991poisson}
P.~Deift and L.-C. Li.
\newblock Poisson geometry of the analog of the {M}iura maps and
  {B}{\"a}cklund-{D}arboux transformations for equations of {T}oda type and
  periodic {T}oda flows.
\newblock {\em Comm. Math. Phys.}, 143(1):201--214, 1991.

\bibitem{doliwa1997geometric}
A.~Doliwa.
\newblock Geometric discretisation of the {T}oda system.
\newblock {\em Phys. Lett. A}, 234(3):187--192, 1997.

\bibitem{faddeev2016liouville}
L.D. Faddeev and L.A. Takhtajan.
\newblock Liouville model on the lattice.
\newblock In {\em Fifty Years of Mathematical Physics: Selected Works of Ludwig
  Faddeev}, pages 159--172. World Scientific, 2016.

\bibitem{felipe2015}
R.~Felipe and G.~Mar{\'\i}~Beffa.
\newblock The pentagram map on {G}rassmannians.
\newblock {\em Ann. Inst. Fourier}, 69(1):421--456, 2019.

\bibitem{fock2014loop}
V.V. Fock and A.~Marshakov.
\newblock Loop groups, clusters, dimers and integrable systems.
\newblock In {\em Geometry and quantization of moduli spaces}, pages 1--65.
  Springer, 2016.

\bibitem{Gekhtman2016}
M.~Gekhtman, M.~Shapiro, S.~Tabachnikov, and A.~Vainshtein.
\newblock Integrable cluster dynamics of directed networks and pentagram maps.
\newblock {\em Adv. Math.}, 300:390--450, 2016.

\bibitem{gekhtman2009poisson}
M.~Gekhtman, M.~Shapiro, and A.~Vainshtein.
\newblock Poisson geometry of directed networks in a disk.
\newblock {\em Selecta Math.}, 15(1):61--103, 2009.

\bibitem{GLICK20111019}
M.~Glick.
\newblock The pentagram map and {Y}-patterns.
\newblock {\em Adv. Math.}, 227(2):1019--1045, 2011.

\bibitem{glick2015}
M.~Glick and P.~Pylyavskyy.
\newblock Y-meshes and generalized pentagram maps.
\newblock {\em Proc. London Math. Soc.}, 112(4):753--797, 2016.

\bibitem{Goncharov2013}
A.B. Goncharov and R.~Kenyon.
\newblock Dimers and cluster integrable systems.
\newblock {\em Ann. Sci. \'Ec. Norm. Sup\'er.}, 46(5):747--813, 2013.

\bibitem{izosimov2019pentagram}
A.~Izosimov.
\newblock The pentagram map, {P}oncelet polygons, and commuting difference
  operators.
\newblock {\em arXiv:1906.10749}, 2019.

\bibitem{kedem2015t}
R.~Kedem and P.~Vichitkunakorn.
\newblock T-systems and the pentagram map.
\newblock {\em J. Geom. Phys.}, 87:233--247, 2015.

\bibitem{khesin2013}
B.~Khesin and F.~Soloviev.
\newblock Integrability of higher pentagram maps.
\newblock {\em Math. Ann.}, 357(3):1005--1047, 2013.

\bibitem{khesin2016}
B.~Khesin and F.~Soloviev.
\newblock The geometry of dented pentagram maps.
\newblock {\em J. Eur. Math. Soc.}, 18:147--179, 2016.

\bibitem{khesin1995poisson}
B.~Khesin and I.~Zakharevich.
\newblock Poisson-{L}ie group of pseudodifferential symbols.
\newblock {\em Comm. Math. Phys.}, 171(3):475--530, 1995.

\bibitem{beffa2015}
G.~Mar{\'\i}~Beffa.
\newblock On integrable generalizations of the pentagram map.
\newblock {\em Int. Math. Res. Not.}, 2015(12):3669--3693, 2015.

\bibitem{beffa2013hamiltonian}
G.~Mar{\'\i}~Beffa and J.P. Wang.
\newblock Hamiltonian evolutions of twisted polygons in $\mathbb{RP}^n$.
\newblock {\em Nonlinearity}, 26(9):2515, 2013.

\bibitem{morier2014linear}
S.~Morier-Genoud, V.~Ovsienko, R.E. Schwartz, and S.~Tabachnikov.
\newblock Linear difference equations, frieze patterns, and the combinatorial
  {G}ale transform.
\newblock {\em Forum Math. Sigma}, 2:e22, 2014.

\bibitem{moser1991discrete}
J.~Moser and A.P. Veselov.
\newblock Discrete versions of some classical integrable systems and
  factorization of matrix polynomials.
\newblock {\em Comm. Math. Phys.}, 139(2):217--243, 1991.

\bibitem{ovenhouse2018non}
N.~Ovenhouse.
\newblock Non-commutative integrability of the {G}rassmann pentagram map.
\newblock {\em Adv. Math.}, 373:107309, 2020.

\bibitem{ovsienko2010pentagram}
V.~Ovsienko, R.~Schwartz, and S.~Tabachnikov.
\newblock The pentagram map: a discrete integrable system.
\newblock {\em Comm. Math. Phys.}, 299(2):409--446, 2010.

\bibitem{ovsienko2013liouville}
V.~Ovsienko, R.~Schwartz, and S.~Tabachnikov.
\newblock Liouville--{A}rnold integrability of the pentagram map on closed
  polygons.
\newblock {\em Duke Math. J.}, 162(12):2149--2196, 2013.

\bibitem{postnikov2006total}
A.~Postnikov.
\newblock Total positivity, {G}rassmannians, and networks.
\newblock {\em arXiv:math/0609764}, 2006.

\bibitem{reiman2003integrable}
A.G. Reiman and M.A. Semenov-Tyan-Shanskii.
\newblock {\em Integrable Systems: Group-Theoretic Approach}.
\newblock Institute for Computer Studies, Moscow--Izhevsk, 2003 (in Russian).

\bibitem{schwartz1992pentagram}
R.~Schwartz.
\newblock The pentagram map.
\newblock {\em Exp. Math.}, 1(1):71--81, 1992.

\bibitem{schwartz2008discrete}
R.~Schwartz.
\newblock Discrete monodromy, pentagrams, and the method of condensation.
\newblock {\em J. Fixed Point Theory Appl.}, 3(2):379--409, 2008.

\bibitem{soloviev2013integrability}
F.~Soloviev.
\newblock Integrability of the pentagram map.
\newblock {\em Duke Math. J.}, 162(15):2815--2853, 2013.

\bibitem{suris1997some}
Yu.B. Suris.
\newblock On some integrable systems related to the {T}oda lattice.
\newblock {\em J. Phys. A}, 30(6):2235, 1997.

\bibitem{suris1999integrable}
Yu.B. Suris.
\newblock Integrable discretizations for lattice system: local equations of
  motion and their {H}amiltonian properties.
\newblock {\em Rev. Math. Phys.}, 11(06):727--822, 1999.

\bibitem{van1979spectrum}
P.~Van~Moerbeke and D.~Mumford.
\newblock The spectrum of difference operators and algebraic curves.
\newblock {\em Acta Math.}, 143(1):93--154, 1979.

\bibitem{veselov1991integrable}
A.P. Veselov.
\newblock Integrable maps.
\newblock {\em Russian Math. Surveys}, 46(5):1--51, 1991.

\end{thebibliography}

\end{document}